\def\Gr {\mathrm{Gr}}
\newtheorem{theorem}{Theorem}[section]
\newtheorem{proposition}[theorem]{Proposition}
\newtheorem{corollary}[theorem]{Corollary}
\newtheorem{lemma}[theorem]{Lemma}
\newtheorem{remark}[theorem]{Remark}
\newtheorem{example}[theorem]{Example}
\newtheorem{definition}[theorem]{Definition}
\date{}
\begin{document}

\hyphenation{Abe-li-an}

\hyphenation{boun-da-ry}

\hyphenation{Cha-o-tic cur-ves}

\hyphenation{El-lip-ti-cal en-coun-ter}

\hyphenation{Lo-ba-chev-sky}

\hyphenation{Mar-den Min-kow-ski}

\hyphenation{pa-ra-met-ri-zes pa-ra-met-ri-za-tion
Pon-ce-let-Dar-boux}

\hyphenation{quad-ric quad-rics qua-dri-ques}

\hyphenation{tra-jec-to-ry trans-ver-sal}

\author{Vladimir Dragovi\'c}
\address{Mathematical Institute SANU, Kneza Mihaila 36, Belgrade,
Serbia\newline\indent Mathematical Physics Group, University of
Lisbon, Portugal} \email{vladad@mi.sanu.ac.rs}

\author{Milena Radnovi\'c}
\address{Mathematical Institute SANU, Kneza Mihaila 36, Belgrade, Serbia}
\email{milena@mi.sanu.ac.rs}

\title[Billiards, line congruences, and double reflection nets]{Billiard algebra, integrable line congruences, and double reflection nets}

\begin{abstract}
The billiard systems within quadrics, playing the role of  discrete analogues of geodesics on ellipsoids, are incorporated into the theory of integrable quad-graphs.
An initial observation is that the Six-pointed star theorem, as the operational consistency for the billiard algebra, is equivalent to an integrabilty condition of a line congruence.
A new notion of the double-reflection nets as a subclass of dual Darboux nets associated with pencils of quadrics is introduced, basic properies and several examples are presented.
Corresponding  Yang-Baxter maps, associated with pencils of quadrics are defined and discussed.
\end{abstract}

\maketitle



\setcounter{tocdepth}{1}%
\tableofcontents

\setlength{\parskip}{2pt}

\section{Introduction}
\label{sec:intro}
Although its historic roots may be traced back to Newton’s \emph{Principia},
the Discrete Differential Geometry emerged as a modern scientific discipline quite recently (see \cite{BS2008book}),
within a study of lattice geometry, where so-called integrability or consistency conditions for quad graphs have been playing a fundamental role.

Geodesics on an ellipsoid present one of the most important and exciting basic examples of the Classical Differential Geometry.
Billiard systems within quadrics (see \cite{DragRadn2011book} for a detailed study),
being discretization of systems of geodesics on ellipsoids,
should be seen as an important part of the foundations of the Discrete Differential Geometry.

The main aim of the present paper is to incorporate billiard systems within quadrics into the story of integrable quad-graphs.
In Section \ref{sec:line}, necessary notions of the theory of integrable systems on quad-graphs from works of Adler, Bobenko, Suris \cite{ABS2003, ABS2009b, BS2008book} are reviewed.
In Section \ref{sec:drc}, we recall definition and main properties of Double reflection configuration from \cite{DragRadn2006,DragRadn2008} and in Proposition \ref{prop:drc.quad} show that this configuration can play a geometric counterpart role of the quad-equation.
In Section \ref{sec:billiard}, we use the billiard algebra, which is developed in our paper \cite{DragRadn2008} as an algebraic tool for synthetic realization of higher-genera addition theorems.
Our main observation there is that the Six-pointed star theorem, i.e.~Theorem \ref{th:zvezda} of the present paper, on a consistency of a certain projective configuration, as the operational consistency for the billiard algebra, is equivalent to an integrabilty condition of a line congruence.
In Section \ref{sec:drnets}, a new notion of double reflection nets is introduced.
We provide four illustrative examples of double reflection nets: two are based on the Poncelet-Darboux grids from \cite{DragRadn2008}, third one on our study of ellipsoidal billiards in pseudo-Euclidean spaces (see \cite{DragRadn2011}), and in the last one we give a general construction of a double reflection net.
After this, $F$-transformations and focal nets for double reflection nets are discussed.
We show that double reflection nets induce a subclass of Grassmannian Darboux nets from \cite{ABS2009} associated with pencils of quadrics.
As we know after Schief (see \cite{Schief2003}) the Darboux nets are associated
to discrete integrable hierarchies.
In Section \ref{sec:yb}, we conclude by definition and study of corresponding Yang-Baxter maps, associated with pencils of quadrics, leading to a natural generalization of the situation analyzed in \cite{ABS2004}.

\section{Line congruences and quad-graphs integrability}
\label{sec:line}
Now, we will start with basic ideas of the theory of integrable systems on quad-graphs from works of Adler, Bobenko, Suris \cite{ABS2003, ABS2009b, BS2008book}.

Recall that the basic building blocks of systems on quad-graphs are
the equations on quadrilaterals of the form
\begin{equation}\label{eq:quadeq}
Q(x_1,x_2,x_3,x_4)=0,
\end{equation}
where $Q$ is a multiaffine polynomial, that is a polynomial of degree one in each argument.

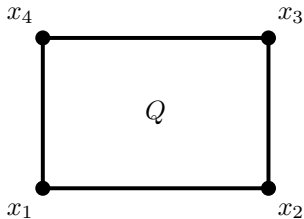
\begin{figure}[h]
\centering
\begin{pspicture}(-0.5,-0.5)(3.5,2.5)
\psset{
    fillstyle=none, linecolor=black, linestyle=solid, linewidth=1.5pt
    }

\pscircle*(0,0){0.1}
\pscircle*(0,2){0.1}
\pscircle*(3,2){0.1}
\pscircle*(3,0){0.1}

\psline(0,0)(0,2)(3,2)(3,0)(0,0)

\rput(-0.3,-0.3){$x_1$}
\rput(-0.3,2.3){$x_4$}
\rput(3.3,2.3){$x_3$}
\rput(3.3,-0.3){$x_2$}

\rput(1.5,1){$Q$}
\end{pspicture}
 \caption{Elementary quadrilateral; quad-equation $Q(x_1,x_2,x_3,x_4)=0$.}\label{fig:quadrilateral}
\end{figure}

Equations of type (\ref{eq:quadeq}) are called \textit{quad-equations}.
The field variables $x_i$ are assigned to four vertices of a quadrilateral as in Figure \ref{fig:quadrilateral}.
Equation (\ref{eq:quadeq}) can be solved for each variable, and the solution is a rational function of the other three variables.
A solution $(a_1,a_2,a_3,a_4)$ of equation (\ref{eq:quadeq}) is \textit{singular} with respect to $x_i$ if it also satisfies the equation:
$$
\frac{\partial Q}{\partial x_i}(a_1,a_2,a_3,a_4)=0.
$$

Following \cite{ABS2009b}, we consider the idea of integrability as
consistency, see Figure \ref{fig:cube}.
\begin{figure}[h]
\centering
\begin{pspicture}(-1.25,-0.5)(2.5,2.75)
\psset{
    fillstyle=none, linecolor=black, linestyle=solid, linewidth=1.5pt
    }

\pscircle*(0,0){0.1}
\pscircle*(0,2){0.1}
\pscircle*(2,2){0.1}
\pscircle*(2,0){0.1}

\pscircle*(-0.75,0.25){0.1}
\pscircle*(-0.75,2.25){0.1}
\pscircle*(1.25,2.25){0.1}
\pscircle*(1.25,0.25){0.1}

\psline(0,0)(0,2)(2,2)(2,0)(0,0)
\psline(0,0)(-0.75,0.25)(-0.75,2.25)(1.25,2.25)(2,2)
\psline(0,2)(-0.75,2.25)

\psset{linestyle=dashed}

\psline(1.25,0.25)(1.25,2.25)
\psline(1.25,0.25)(-0.75,0.25)
\psline(1.25,0.25)(2,0)

\rput(0.2,-0.3){$x$}
\rput(2.2,-0.3){$x_1$}
\rput(0.3,1.7){$x_3$}
\rput(2.4,1.7){$x_{13}$}

\rput(-1.1,0.5){$x_2$}
\rput(0.9,0.5){$x_{12}$}
\rput(0.9,2.5){$x_{123}$}
\rput(-1.1,2.5){$x_{23}$}

\end{pspicture}
 \caption{3D-consistency.}\label{fig:cube}
\end{figure}
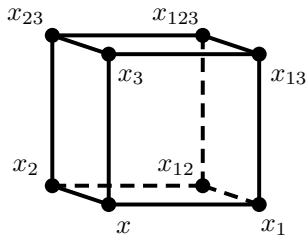
We assign six quad-equations to the faces of coordinate cube:
\begin{gather*}
Q(x,x_1,x_{12},x_2)=0, \quad
Q(x,x_1,x_{13},x_3)=0, \quad
Q(x,x_2,x_{23},x_3)=0,\\
Q(x_1,x_{12},x_{123},x_{13})=0,\quad
Q(x_2,x_{12},x_{123},x_{23})=0,\quad
Q(x_3,x_{13},x_{123},x_{23})=0.
\end{gather*}
Notice that, for given $x$, $x_1$, $x_2$, $x_3$, values of $x_{12}$, $x_{13}$, $x_{23}$ can be determined from the first three of these quad-equations.
In general, the three values for $x_{123}$ determined from the rest three quad-equations may be distinct. 

The system is said to be \textit{3D-consistent} if the three values for $x_{123}$ obtained from these equations coincide for arbitrary initial data 
$x$, $x_1$, $x_2$, $x_3$.

We will be interested in geometric version of integrable quad-graphs, with lines in
$\mathbf P^d$ playing a role of vertex fields.
More precisely, denote by $\mathcal{L}^d$ the Grassmannian $\Gr(2,d+1)$ of two-dimensional vector subspaces of the $(d+1)$-dimensional vector space, $d\ge2$.

For a map
$$
\ell:\mathbf{Z}^m \rightarrow\mathcal{L}^d,
$$
we denote: $\ell_i:\mathbf{Z}^m \rightarrow\mathcal{L}^d$, such that
$\ell_i(u)=\ell(u+\mathbf{e}_i)$ for each $u\in\mathbf{Z}^m$, where
$\mathbf{e}_i\in\mathbf{Z}^m$ is the $i$-th unit coordinate vector, $i\in\{1,\dots,m\}$.

Following \cite{BS2008book}, we say that $\ell$ is \emph{a line
congruence} if neighbouring lines $\ell(u)$ and $\ell_i(u)$
intersect for each $u\in\mathbf{Z}^m$ and $i\in\{1,\dots,m\}$.

Given a line congruence $\ell: \mathbf Z^m \rightarrow \mathcal{L}^d$, one may define its \emph{$i$-th focal net} as the map
$$
F^{(i)}:\mathbf Z^m\rightarrow \mathbf P^{d+1},
\quad
F^{(i)}(u)=\ell(u)\cap\ell_i(u).
$$

For two given line congruences
$\ell, \ell^+: \mathbf{Z}^m\rightarrow \mathcal{L}^d $,
there is an \emph{$F$-transformation} if for each $u\in \mathbf Z^m $ the corresponding lines $\ell(u)$
and $\ell^+(u)$ intersect.

As we mentioned before, in the case of algebraic quad-graphs, there is a multiaffine function $Q$, which determines a fourth vertex of a quadrilateral, according to the other three.
In the sequel, we are going to introduce a geometric configuration of double reflection, which is going to play a role of a multiaffine function $Q$ in the case of line congruences we are going to study.
In other words, given three lines in a configuration, the fourth line will be uniquely determined, as it will be proved in Proposition \ref{prop:drc.quad} at the end of Section \ref{sec:drc}.
Moreover, using properties of billiard algebra developed in \cite{DragRadn2008}, we will prove that such a quad-relation is $3D$-consistent, see Theorem \ref{th:3d}.

\section{Double reflection configuration}\label{sec:drc}
In this section, we review fundamental projective geometry configuration of double reflection in the $d$-dimensional projective space $\mathbf{P}^d$ over an arbitrary field of characteristic not equal to $2$.
Detailed discussion on this can be found in \cite{DragRadn2008,DragRadn2011book} (see also \cite{CCS1993}).

The section is concluded by Proposition \ref{prop:drc.quad}, where we show that the double configuration can take the role of the quad-equation, that is every line in such a configuration is determined by the remaining three.

Let us start with recalling the notions of quadrics and confocal families in the projective space.

\emph{A quadric} in $\mathbf{P}^d$ is the set given by equation of the form:
$$
(Q\xi,\xi)=0,
$$
where $Q$ is a symmetric $(d+1)\times(d+1)$ matrix, and $\xi=[\xi_0:\xi_1:\dots:\xi_d]$ are homogeneous coordinates of a point in the space.

Assume two quadrics are given:
$$
\mathcal{Q}_1\ :\ (Q_1\xi,\xi)=0,
\qquad
\mathcal{Q}_2\ :\ (Q_2\xi,\xi)=0.
$$
\emph{A pencil of quadrics} is the family of quadrics given by equations:
$$
\left((Q_1+\lambda Q_2)\xi,\xi\right)=0,
\quad
\lambda\in\mathbf{P}^1.
$$

\emph{A confocal system of quadrics} is a family of quadrics such that its projective dual is a pencil of quadrics.

Now, let us recall the definition of reflection on a quadric in the projective space, where metrics is not defined.
This definition, together with its crucial properties -- One Reflection Theorem and Double Reflection Theorem, can be found in \cite{CCS1993}.


Denote by $u$ the tangent plane to $\mathcal Q_1$ at point $x$ and by $z$ the pole
of $u$ with respect to $\mathcal Q_2$.
Suppose lines $\ell_1$ and $\ell_2$ intersect at $x$, and the plane containing these two lines meet $u$ along $\ell$.

\begin{definition}
If lines $\ell_1$, $\ell_2$, $xz$, $\ell$ are coplanar and harmonically conjugated, we say that rays $\ell_1$ and $\ell_2$ \emph{obey the reflection law} at the point $x$ of the quadric $\mathcal{Q}_1$ with respect to the confocal system which contains 
$\mathcal{Q}_1$ and $\mathcal{Q}_2$.
\end{definition}

If we introduce a coordinate system in which quadrics $\mathcal{Q}_1$ and
$\mathcal{Q}_2$ are confocal in the usual sense, reflection defined in this way is same as the standard, metric one.

\begin{theorem}[One Reflection Theorem]\label{th:ORT}
Suppose lines $\ell_1$ and $\ell_2$ obey the reflection law at point $x$ of
$\mathcal{Q}_1$ with respect to the confocal system determined by quadrics
$\mathcal{Q}_1$ and $\mathcal{Q}_2$.
Let $\ell_1$ intersect $\mathcal{Q}_2$ at $y_1'$ and $y_1$, $u$ be a tangent plane to 
$\mathcal{Q}_1$ at $x$, and $z$ its pole with respect to $\mathcal{Q}_2$.
Then lines $y_1'z$ and $y_1z$ respectively contain intersecting points $y_2'$ and $y_2$ of line $\ell_2$ with $\mathcal{Q}_2$.
Converse is also true.
\end{theorem}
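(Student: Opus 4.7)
The plan is to reduce the statement to a classical planar fact about the self-polar diagonal triangle of a complete quadrilateral inscribed in a conic. First I would work inside the plane $\pi$ spanned by $\ell_1$ and $\ell_2$. By the very definition of reflection, the four lines $\ell_1,\ell_2,xz,\ell$ are coplanar, so in particular $z\in\pi$; moreover $x\in\pi\cap u=\ell$ since $u$ is tangent to $\mathcal{Q}_1$ at $x$. Restricting to $\pi$ produces a planar conic $\mathcal{C}=\mathcal{Q}_2\cap\pi$ containing the four points $\{y_1,y_1'\}=\ell_1\cap\mathcal{Q}_2$ and $\{y_2,y_2'\}=\ell_2\cap\mathcal{Q}_2$.

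Next I would observe that $z$ is the pole of $\ell$ with respect to $\mathcal{C}$: the polar hyperplane of $z$ with respect to $\mathcal{Q}_2$ is $u$ by hypothesis, and since $z\in\pi$, the pole-polar relation restricts cleanly, so that the polar of $z$ inside $\pi$ equals $\pi\cap u=\ell$. The four points $y_1,y_1',y_2,y_2'$ on $\mathcal{C}$ form a complete quadrilateral whose diagonal points are
\[
A=\ell_1\cap\ell_2=x,\qquad B=y_1y_2\cap y_1'y_2',\qquad D=y_1y_2'\cap y_1'y_2.
\]
It is a classical fact of plane projective geometry that the diagonal triangle $ABD$ is self-polar with respect to $\mathcal{C}$, and in particular that at the vertex $A=x$ the four lines $\ell_1,\ell_2,AB,AD$ form a harmonic pencil.

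By the reflection hypothesis, $(\ell_1,\ell_2;xz,\ell)=-1$. Matching this with $(\ell_1,\ell_2;AB,AD)=-1$, the unordered pair $\{xz,\ell\}$ must coincide with $\{AB,AD\}$. Whichever way the labels align, one of $B,D$ lies on the line $xz$ and is the pole with respect to $\mathcal{C}$ of the other diagonal line, which by the previous step is $\ell$; hence that vertex equals $z$. Reading off the two sides of the diagonal triangle through $z$ then produces the claimed collinearities, possibly after interchanging the labels of $y_2$ and $y_2'$. The converse is obtained by reversing the argument: once $z$ is a diagonal vertex of the inscribed quadrilateral and is the pole of $\ell$, the built-in harmonic property of the complete quadrilateral immediately recovers $(\ell_1,\ell_2;xz,\ell)=-1$, i.e.\ the reflection law. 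The only real obstacle is setting up the planar reduction carefully and keeping track of the labeling ambiguity $y_2\leftrightarrow y_2'$; once that is pinned down, the result is a direct application of the self-polar diagonal triangle theorem.
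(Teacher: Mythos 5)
The paper offers no proof of Theorem~\ref{th:ORT} to compare against --- it is imported from \cite{CCS1993} --- so your proposal has to stand on its own. Your overall strategy is the right one: reduce to the plane $\pi$ of $\ell_1,\ell_2$, note that coplanarity in the reflection law forces $z\in\pi$, that the polar of $z$ with respect to the conic $\mathcal{C}=\mathcal{Q}_2\cap\pi$ is $\ell=u\cap\pi$, and then invoke the self-polar diagonal triangle of the quadrangle $y_1y_1'y_2y_2'$ inscribed in $\mathcal{C}$, with the harmonic pencil $(\ell_1,\ell_2;AB,AD)=-1$ at the diagonal point $A=x$. All of that is correct.

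The gap is the sentence ``Matching this with $(\ell_1,\ell_2;AB,AD)=-1$, the unordered pair $\{xz,\ell\}$ must coincide with $\{AB,AD\}$.'' This inference does not follow from the two harmonic relations alone: harmonic conjugation with respect to the fixed pair $\{\ell_1,\ell_2\}$ is an involution on the pencil of lines through $x$, and it has a one-parameter family of pairs (take any $m$ through $x$ and its harmonic conjugate $m'$), so membership of both $\{xz,\ell\}$ and $\{AB,AD\}$ in that involution identifies nothing. To close the gap you must use the polarity of $\mathcal{C}$ a second time. One way: since $x\in\ell$ and $\ell$ is the polar of $z$, the point $z$ lies on the polar of $x$, i.e.\ on the line $BD$; moreover $\{xz,\ell\}$ is a pair of \emph{conjugate} lines through $x$ with respect to $\mathcal{C}$ (the pole $z$ of $\ell$ lies on $xz$), and so is $\{AB,AD\}$ (the pole $B$ of $AD$ lies on $AB$). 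Hence both pairs are common to two involutions of the pencil at $x$: harmonic conjugation with respect to $\{\ell_1,\ell_2\}$ and conjugacy with respect to $\mathcal{C}$. These involutions are distinct, since their fixed elements differ ($\ell_1,\ell_2$ are proper secants of $\mathcal{C}$, not the tangents from $x$), and two distinct involutions of a pencil have exactly one pair in common (their composition is a nonidentity projectivity of the pencil, so it fixes at most two elements, while $xz$, $\ell$, $AB$, $AD$ would all be fixed). Only after this does $\{xz,\ell\}=\{AB,AD\}$ follow; from there your identification of $z$ with a diagonal vertex, the resulting collinearities, and the converse all go through as you describe.
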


\begin{corollary}\label{cor:ORT}
Let lines $\ell_1$ and $\ell_2$ obey the reflection law of $\mathcal{Q}_1$ with respect to the confocal system determined by quadrics $\mathcal{Q}_1$ and $\mathcal{Q}_2$.
Then $\ell_1$ is tangent to $\mathcal{Q}_2$ if and only if is tangent $\ell_2$ to 
$\mathcal{Q}_2$;
$\ell_1$ intersects $\mathcal{Q}_2$ at two points if and only if $\ell_2$ intersects
$\mathcal{Q}_2$ at two points.
\end{corollary}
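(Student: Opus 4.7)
The plan is to derive the corollary as a direct counting argument from the One Reflection Theorem, by setting up a natural bijection between the intersection points of $\ell_1$ with $\mathcal{Q}_2$ and those of $\ell_2$ with $\mathcal{Q}_2$, and then observing that tangency is the case in which this two-element set degenerates to a single point.

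First, I would reformulate Theorem \ref{th:ORT} in the following shape: with $u$ the tangent plane to $\mathcal{Q}_1$ at $x$ and $z$ its pole with respect to $\mathcal{Q}_2$, every intersection point $y$ of $\ell_1$ with $\mathcal{Q}_2$ gives, via the line $yz$, an intersection point of $\ell_2$ with $\mathcal{Q}_2$, and conversely. This defines a map $\varphi:\ell_1\cap\mathcal{Q}_2\to\ell_2\cap\mathcal{Q}_2$ sending $y$ to the point of $\ell_2\cap\mathcal{Q}_2$ lying on the line $yz$; by the symmetric statement starting from $\ell_2$, the map is a bijection. Next, I would deduce the corollary as follows. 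If $\ell_1$ is tangent to $\mathcal{Q}_2$, the set $\ell_1\cap\mathcal{Q}_2$ is a single point, hence $\ell_2\cap\mathcal{Q}_2$ is also a single point, so $\ell_2$ is tangent to $\mathcal{Q}_2$. Conversely, if $\ell_1\cap\mathcal{Q}_2=\{y_1,y_1'\}$ with $y_1\ne y_1'$, the lines $y_1 z$ and $y_1'z$ through $z$ are distinct and share only the point $z$; since $z$ does not in general lie on $\mathcal{Q}_2$, their images under $\varphi$ are two distinct points, and $\ell_2$ meets $\mathcal{Q}_2$ in two points as well. Interchanging the roles of $\ell_1$ and $\ell_2$ supplies the remaining implications, and both equivalences of the statement follow at once.

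The only genuine subtlety I would worry about is the handling of degenerate configurations: $z$ might actually lie on $\mathcal{Q}_2$ (precisely when the hyperplane $u$ happens to be tangent to $\mathcal{Q}_2$), or the auxiliary line $y_1 z$ could coincide with $\ell_2$. I would dispose of these either by restricting to the open locus on which no such coincidence occurs and invoking the closedness of the tangency condition, or by verifying each exceptional case directly; no significant new work should be required beyond what is supplied by Theorem \ref{th:ORT}.
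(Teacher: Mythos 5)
The paper gives no proof of Corollary \ref{cor:ORT}: it is stated as an immediate consequence of the One Reflection Theorem, with the supporting material deferred to the cited sources, so there is no in-paper argument to compare against line by line. Your reading of Theorem \ref{th:ORT} as setting up, via the lines through the pole $z$, a correspondence between $\ell_1\cap\mathcal{Q}_2$ and $\ell_2\cap\mathcal{Q}_2$ (bijective by the forward statement plus its converse with the roles of $\ell_1$ and $\ell_2$ exchanged) is exactly the intended mechanism, and the counting argument that follows is sound: two distinct points $y_1\neq y_1'$ give distinct lines $y_1z$, $y_1'z$ meeting $\ell_2$ in distinct points unless both meet it at $z$ itself, and in the tangent case $\ell_2\cap\mathcal{Q}_2$ is pinned inside the single point $y_1z\cap\ell_2$. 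The exceptional configurations you flag are the genuine ones --- $z\in\mathcal{Q}_2\cap\ell_2$ (equivalently $u$ tangent to $\mathcal{Q}_2$ at a point of $\ell_2$) and $y_1z=\ell_2$ --- and disposing of them by restricting to the generic locus and using closedness of the tangency condition, or by direct inspection, is a standard and acceptable device here. One small point worth making explicit if you write this up: passing from ``$\ell_2$ meets $\mathcal{Q}_2$ in exactly one point'' to ``$\ell_2$ is tangent to $\mathcal{Q}_2$'' uses that the intersection of a line with a quadric is a degree-two divisor (so a single set-theoretic point is a double point), which is automatic over an algebraically closed field but deserves a word in the paper's setting of an arbitrary field of characteristic $\neq 2$. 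With that caveat, the proposal is correct and consistent with what the paper implicitly intends.
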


Next theorem is crucial for our further considerations.

\begin{theorem}[Double Reflection Theorem]\label{th:DRT}
Suppose that $\mathcal{Q}_1$, $\mathcal{Q}_2$ are given quadrics and $\ell_1$ line intersecting $\mathcal{Q}_1$ at the point $x_1$ and $\mathcal{Q}_2$ at $y_1$.
Let $u_1$ be the tangent plane to $\mathcal{Q}_1$ at $x_1$;
$z_1$ the pole of $u_1$ with respect to $\mathcal{Q}_2$;
$v_1$ the tangent plane $\mathcal{Q}_2$ at $y_1$;
and by $w_1$ the pole of $v_1$ with respect to $\mathcal{Q}_1$.
Denote by $x_2$ the intersecting point of line $w_1x_1$ with $\mathcal{Q}_1$,
$x_2\neq x_1$;
by $y_2$ intersection of $y_1z_1$ with $\mathcal{Q}_2$, $y_2\neq y_1$;
and $\ell_2=x_1y_2$, $\ell_1'=y_1x_2$, $\ell_2'=x_2y_2$.

Then pair $\ell_1$, $\ell_2$ obey the reflection law at $x_1$ of $\mathcal{Q}_1$;
$\ell_1$, $\ell_1'$ obey the reflection law at $y_1$ of $\mathcal{Q}_2$;
$\ell_2$, $\ell_2'$ obey the reflection law at $y_2$ of $\mathcal{Q}_2$;
and $\ell_1'$, $\ell_2'$ obey the reflection law at point  $x_2$ of $\mathcal{Q}_1$.
\end{theorem}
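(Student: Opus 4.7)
My plan is to exploit the symmetry of the construction, which is manifestly invariant under the involution swapping $(\mathcal{Q}_1, x_i, u_i, w_i) \leftrightarrow (\mathcal{Q}_2, y_i, v_i, z_i)$. Under this involution the four reflection assertions pair up: the reflection at $x_1$ corresponds to the reflection at $y_1$, and the reflection at $y_2$ corresponds to the reflection at $x_2$. It therefore suffices to prove one from each pair, and I would extract both via the converse direction of the One Reflection Theorem \ref{th:ORT}.

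For the reflection at $x_1 \in \mathcal{Q}_1$, let $\tilde\ell$ denote the unique line through $x_1$ for which $(\ell_1, \tilde\ell)$ obeys the reflection law at $x_1$. Writing $\ell_1\cap\mathcal{Q}_2 = \{y_1, y_1'\}$, Theorem \ref{th:ORT} asserts that the two intersections of $\tilde\ell$ with $\mathcal{Q}_2$ lie respectively on the lines $y_1 z_1$ and $y_1' z_1$. The line $y_1 z_1$ meets $\mathcal{Q}_2$ only at $y_1$ and at the defined point $y_2$, so, excluding the degenerate case $\tilde\ell = \ell_1$, the intersection on $y_1 z_1$ must be $y_2$. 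Hence $\tilde\ell$ passes through both $x_1$ and $y_2$, giving $\tilde\ell = \ell_2$. The quadric-swap symmetry then yields the reflection of $(\ell_1, \ell_1')$ at $y_1 \in \mathcal{Q}_2$.

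For the reflection at $y_2$, I apply the same reasoning to $\ell_2$ at its second intersection with $\mathcal{Q}_2$, namely $y_2$: the reflected line $\tilde\ell$ passes through $y_2$ and through the second intersection with $\mathcal{Q}_1$ of the line $x_1 w_2$, where $w_2$ is the pole with respect to $\mathcal{Q}_1$ of the tangent plane to $\mathcal{Q}_2$ at $y_2$. Matching this to $\ell_2' = x_2 y_2$ reduces, by the definition of $x_2$, to showing that $w_2$ lies on the line $x_1 x_2$. Equivalently, both hyperplanes $v_1$ and $v_2$ must contain the codimension-two polar $u_1 \cap u_2$ of the line $x_1 x_2$ with respect to $\mathcal{Q}_1$. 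This polar/conjugacy identity is the real geometric content of the theorem and the step where I expect the main difficulty to lie.

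I would discharge that identity either by a direct verification in coordinates that simultaneously diagonalize the pencil $\mathcal{Q}_1 + \lambda \mathcal{Q}_2$, so that tangent planes and poles become explicit linear functions of the pencil parameter and the claim collapses to a routine algebraic identity in the parameters along $\ell_1$, or, more intrinsically, by a second application of Theorem \ref{th:ORT} at the corner $y_1$ which extracts the required polar relation from the already-established reflections at $x_1$ and $y_1$. Once this collinearity is secured, $\tilde\ell = \ell_2'$ and the quadric-swap symmetry produces the fourth reflection at $x_2$, closing the configuration.
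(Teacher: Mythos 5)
First, note that the paper does not actually prove Theorem \ref{th:DRT}: it is imported verbatim from the literature (the text explicitly refers the reader to \cite{CCS1993} for this statement and for Theorem \ref{th:ORT}), so there is no in-paper proof to compare against. Judged on its own terms, your proposal is well structured — the quadric-swap symmetry is genuine and does pair up the four assertions as you claim, and your derivation of the reflection at $x_1$ is correct: the reflected line $\tilde\ell$ of $\ell_1$ at $x_1$ is uniquely determined, Theorem \ref{th:ORT} forces one of its intersections with $\mathcal{Q}_2$ onto the line $y_1z_1$, and since that line meets $\mathcal{Q}_2$ only at $y_1$ and $y_2$, generically $\tilde\ell=x_1y_2=\ell_2$. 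The symmetry then gives the reflection at $y_1$.

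The gap is exactly where you flag it: the collinearity of $w_2$ with $x_1$ and $x_2$ is asserted to be "the real geometric content" but is never established. Neither of your two exit strategies is carried out, and the second one ("a second application of Theorem \ref{th:ORT} at the corner $y_1$") does not appear to work: applying \ref{th:ORT} to the already-known reflections only reproduces the incidences $x_2\in x_1w_1$ and $y_2\in y_1z_1$ that you started from, and says nothing about the tangent hyperplane $v_2$ at $y_2$ or its pole $w_2$. The missing fact is not a consequence of \ref{th:ORT} but of elementary polarity, and it is the pencil property the paper records in the remark following the theorem: since $y_2$ lies on the line $y_1z_1$, and the polar of that line with respect to $\mathcal{Q}_2$ is $v_1\cap u_1$ (the polars of $y_1$ and $z_1$ being $v_1$ and $u_1$), the tangent hyperplane $v_2$ contains $v_1\cap u_1$; likewise $u_2\supseteq u_1\cap v_1$ because $x_2\in x_1w_1$. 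Hence $u_1,u_2,v_1,v_2$ all belong to the pencil of hyperplanes through $u_1\cap v_1$, and applying the polarity of $\mathcal{Q}_1$ to this pencil yields a line of poles containing $x_1$ (pole of $u_1$), $w_1$ (pole of $v_1$), and $w_2$ (pole of $v_2$), which is precisely the collinearity you need. With that one-line polarity argument inserted, your identification $\tilde\ell=\ell_2'$ at $y_2$ goes through and the symmetry closes the configuration; without it, the proof is incomplete at its central step.
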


Let us remark that the four tangent planes at the reflection points belong to a pencil.

\begin{corollary}\label{cor:refl}
If line $\ell_1$ is tangent to quadric $\mathcal{Q}'$ confocal with $\mathcal{Q}_1$ and $\mathcal{Q}_2$, then rays $\ell_2$, $\ell_1'$, $\ell_2'$ also touch $\mathcal{Q}'$.
\end{corollary}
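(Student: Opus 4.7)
The plan is to deduce Corollary \ref{cor:refl} by iterating the tangency version of the One Reflection Theorem (Corollary \ref{cor:ORT}) around each of the four reflections listed in the Double Reflection Theorem. The pattern of the argument is a cascade: $\ell_1$ tangent to $\mathcal{Q}'$ forces $\ell_2$ tangent to $\mathcal{Q}'$ via the reflection at $x_1$; then $\ell_1$ tangent to $\mathcal{Q}'$ forces $\ell_1'$ tangent to $\mathcal{Q}'$ via the reflection at $y_1$; and finally $\ell_2'$ is obtained as tangent to $\mathcal{Q}'$ either from $\ell_2$ via the reflection at $y_2$ or from $\ell_1'$ via the reflection at $x_2$ (the two ways agree, providing a consistency check).

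The subtle point, and the step I would write up carefully, is that Corollary \ref{cor:ORT} as stated concerns tangency to the specific auxiliary quadric $\mathcal{Q}_2$, whereas here I wish to apply it to an arbitrary confocal $\mathcal{Q}'$. The way I would bridge this is to first observe that the reflection law in the sense of the definition above depends only on the confocal family, not on the particular second quadric $\mathcal{Q}_2$ used to construct the pole $z$ of the tangent plane $u$. I would verify this invariance in a coordinate system in which the pencil is simultaneously diagonalized: the tangent plane $u$ at $x \in \mathcal{Q}_1$ has its pole with respect to $Q_1+\lambda Q_2$ varying along the line through $x$ determined by $Q_2^{-1}Q_1(x)$, and the harmonic conjugation relation among $\ell_1$, $\ell_2$, $xz$, and the trace line $\ell$ is preserved under this one-parameter family of auxiliary quadrics. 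Once this invariance is established, Corollary \ref{cor:ORT} applies verbatim with $\mathcal{Q}'$ in place of $\mathcal{Q}_2$, and the cascade above runs without further effort.

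The main obstacle is thus the confocal-independence of the reflection law; everything else is a purely formal bookkeeping over the four pairs of rays in Theorem \ref{th:DRT}. An alternative, more synthetic route I would consider if the direct check proves awkward is to use the remark just before the corollary, namely that the four tangent planes at $x_1, y_1, y_2, x_2$ belong to a pencil: the pole of each such plane with respect to a fixed $\mathcal{Q}'$ then lies in a common linear subspace, and one can trace how the tangency to $\mathcal{Q}'$ is preserved by the harmonic construction directly, bypassing the need to re-derive confocal invariance. Either way, the corollary reduces to repeated application of a single tangency-propagation lemma; no hard projective geometry beyond what the Double Reflection Theorem already encodes should be needed.
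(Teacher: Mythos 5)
The paper states this corollary without proof (it is imported, along with the One and Double Reflection Theorems, from \cite{CCS1993} and \cite{DragRadn2008}), so there is no in-paper argument to compare against; judged on its own, your cascade is correct and is the standard way to derive it. You have also correctly isolated the one nontrivial point: Corollary \ref{cor:ORT} speaks of tangency to the auxiliary quadric $\mathcal{Q}_2$ used in the pole construction, so to propagate tangency to a third confocal $\mathcal{Q}'$ one must know that the reflection law at $x\in\mathcal{Q}_1$ depends only on the confocal system and not on which second quadric defines the pole $z$. That invariance does hold, because the lines $\ell$ and the plane of $\ell_1,\ell_2$ do not involve the auxiliary quadric at all, and the poles of the fixed tangent hyperplane $u=Q_1x$ with respect to all members of the family are collinear with $x$, so the line $xz$ in the harmonic-conjugacy condition is the same for every choice. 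One correction to your verification, though: a confocal system is by definition the \emph{dual} of a pencil, so the relevant family of poles is $Q_\lambda^{-1}u=(Q_1^{-1}+\lambda Q_2^{-1})Q_1x=x+\lambda Q_2^{-1}Q_1x$, coming from the dual pencil $Q_1^{-1}+\lambda Q_2^{-1}$, and \emph{not} the poles with respect to the linear pencil $Q_1+\lambda Q_2$, which are $(Q_1+\lambda Q_2)^{-1}Q_1x$ and do not trace a line in general. Your displayed direction $Q_2^{-1}Q_1(x)$ is the right one, so the slip is only in naming the family; with that fixed, the three applications of Corollary \ref{cor:ORT} (at $x_1$ on $\mathcal{Q}_1$ for $\ell_2$, at $y_1$ on $\mathcal{Q}_2$ for $\ell_1'$, and at $y_2$ on $\mathcal{Q}_2$ or $x_2$ on $\mathcal{Q}_1$ for $\ell_2'$) complete the proof.
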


The following definition of virtual reflection configuration and double reflection configurations is from \cite{DragRadn2008}.
These configurations played central role in that work.
In Theorem \ref{th:virt.refl}, important properties of these configurations is given.
This theorem is proved in \cite{DragRadn2008}.

The configurations are connected with so-called real and virtual reflections, but their properties remain in the projective case, too.

Let points  $x_1$, $x_2$ belong to $\mathcal{Q}_1$ and  $y_1$, $y_2$ to $\mathcal{Q}_2$.

\begin{definition}\label{def:VRC}
We will say that the quadruple of points $x_1, x_2, y_1, y_2$ constitutes 
\emph{a virtual reflection configuration} if pairs of lines 
$x_1 y_1$, $x_1 y_2$; $x_2 y_1$, $x_2 y_2$; $x_1 y_1$, $x_2 y_1$; $x_1 y_2$, $x_2 y_2$ satisfy the reflection law at points
$x_1$, $x_2$ of $\mathcal Q_1$ and $y_1$, $y_2$ of $\mathcal Q_2$
respectively, with respect to the confocal system determined by
$\mathcal Q_1$ and $\mathcal Q_2$.

If, additionally, the tangent planes to $\mathcal Q_1, \mathcal Q_2$
at $x_1, x_2$; $y_1, y_2$ belong to a pencil, we say that these
points constitute \emph{a double reflection configuration} (see
Figure \ref{fig:virtual_reflection}).
\end{definition}

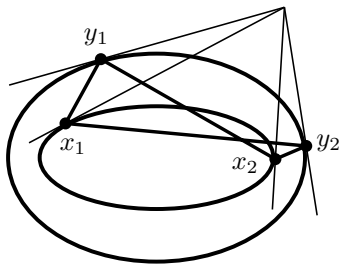
\begin{figure}[h]
\centering
\begin{pspicture}(-2.5,-2)(2.5,2.5)
\psset{fillstyle=none, linecolor=black, linestyle=solid, linewidth=1.5pt}

\psellipse(0,0)(2,1.41)
\psellipse(0,0)(1.58,0.71)

\psset{linewidth=0.6pt}

\psline(1.7,2)(1.99,0.155)(2.135,-0.7675)
\psline(1.7,2)(-0.74,1.31)(-1.96,0.965)
\psline(1.7,2)(1.58,-0.02)(1.54,-0.69)
\psline(1.7,2)(-1.21,0.455)(-1.695,0.197)

\psset{linewidth=1.3pt}

\psline(1.99,0.155)(1.58,-0.02)(-0.74,1.31)(-1.21,0.455)(1.99,0.155)

\pscircle*(1.99,0.155){0.08}
\pscircle*(1.58,-0.02){0.08}
\pscircle*(-0.74,1.31){0.08}
\pscircle*(-1.21,0.455){0.08}

\rput(-1.11,0.155){$x_1$}
\rput(-0.8,1.61){$y_1$}
\rput(2.29,0.2){$y_2$}
\rput(1.18,-0.1){$x_2$}

\end{pspicture}
\caption{Double reflection configuration}\label{fig:virtual_reflection}
\end{figure}

Now, we list some of the basic facts about the double reflection
configurations.

\begin{theorem}\label{th:virt.refl}
Let $\mathcal Q_1$, $\mathcal Q_2$ be two quadrics in the projective space
$\mathbf{P}^d$, $x_1$, $x_2$ points on $\mathcal Q_1$ and $y_1$, $y_2$ on 
$\mathcal{Q}_2$.
If the tangent hyperplanes at these points to the quadrics belong to a pencil, then $x_1, x_2, y_1, y_2$
constitute a virtual reflection configuration.

Furthermore, suppose that the projective space is defined over the field of reals.
Introduce a coordinate system, such that $\mathcal{Q}_1$, $\mathcal{Q}_2$ become confocal ellipsoids in the Euclidean space.
If $\mathcal{Q}_1$ is placed inside $\mathcal{Q}_2$, then the sides of the quadrilateral $x_1y_1x_2y_2$ obey the real reflection from $\mathcal{Q}_2$ and the virtual reflection from $\mathcal{Q}_1$.
\end{theorem}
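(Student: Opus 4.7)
My strategy is to deduce the first claim directly from the Double Reflection Theorem (Theorem \ref{th:DRT}), using the hypothesis on the pencil of tangent hyperplanes only to identify the output of the DRT construction with the given data. Take $\ell_1=x_1y_1$; Theorem \ref{th:DRT} then produces a point $x_2^{\ast}\in\mathcal{Q}_1$ as the second intersection of the line $w_1x_1$ with $\mathcal{Q}_1$ (where $w_1$ is the pole of the tangent hyperplane $v_1$ at $y_1$ with respect to $\mathcal{Q}_1$) and a point $y_2^{\ast}\in\mathcal{Q}_2$ as the second intersection of $y_1z_1$ with $\mathcal{Q}_2$ (with $z_1$ the pole of $u_1$ with respect to $\mathcal{Q}_2$). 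By the remark immediately following Theorem \ref{th:DRT}, the four tangent hyperplanes at $x_1,x_2^{\ast},y_1,y_2^{\ast}$ belong to some pencil $\Pi^{\ast}$.

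By hypothesis, the tangent hyperplanes at $x_1,x_2,y_1,y_2$ belong to a pencil $\Pi$. Both $\Pi$ and $\Pi^{\ast}$ contain $u_1$ and $v_1$, and in the generic case $u_1\neq v_1$ two distinct hyperplanes span a unique pencil, so $\Pi=\Pi^{\ast}$. Any pencil, being a line in the dual space, meets the dual quadric $\mathcal{Q}_1^{\ast}$ in at most two points, hence contains at most two tangent hyperplanes of $\mathcal{Q}_1$; in $\Pi$ these are $u_1,u_2$, while in $\Pi^{\ast}$ they are $u_1,u_2^{\ast}$. Therefore $u_2^{\ast}=u_2$, and passing to poles with respect to the nondegenerate $\mathcal{Q}_1$ gives $x_2^{\ast}=x_2$; symmetrically $y_2^{\ast}=y_2$. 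Then the four reflection laws produced by Theorem \ref{th:DRT} at $x_1,y_1,y_2,x_2$ match, one-by-one, the four pairs of lines required by Definition \ref{def:VRC}.

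For the second claim, in the real Euclidean setting with $\mathcal{Q}_1$ inside $\mathcal{Q}_2$ we have $x_1,x_2$ in the interior of $\mathcal{Q}_2$ and $y_1,y_2$ in the exterior of $\mathcal{Q}_1$. At $y_1\in\mathcal{Q}_2$ both sides $y_1x_1,y_1x_2$ of the quadrilateral head into the interior of $\mathcal{Q}_2$, so the projective reflection law already established in the first part is realized by the physical billiard bounce off the inside of $\mathcal{Q}_2$, i.e.\ a real reflection, and the same applies at $y_2$. At $x_1\in\mathcal{Q}_1$ both sides $x_1y_1,x_1y_2$ leave $x_1$ toward the exterior of $\mathcal{Q}_1$, so no physical bounce takes place on $\mathcal{Q}_1$ although the projective reflection law still holds; this is precisely the definition of virtual reflection, and the same conclusion applies at $x_2$. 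The main technical obstacle is the identification $\Pi=\Pi^{\ast}$, hence $x_2^{\ast}=x_2$, which depends on mild non-degeneracy of the initial data (at minimum $u_1\neq v_1$ and $u_2\neq u_1$); degenerate positions where a point lies on $\mathcal{Q}_1\cap\mathcal{Q}_2$ with coinciding tangent hyperplanes would have to be excluded or handled by a continuity argument.
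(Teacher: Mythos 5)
The paper itself does not prove this statement---it defers to \cite{DragRadn2008}---so your argument has to be judged on its own merits. The first part of your proposal is sound and is, in spirit, the natural derivation: run the Double Reflection Theorem (Theorem \ref{th:DRT}) on $\ell_1=x_1y_1$, use the remark that its four tangent hyperplanes lie in a pencil $\Pi^{\ast}$, observe that $\Pi$ and $\Pi^{\ast}$ share $u_1$ and $v_1$ and hence coincide, and then use the fact that a pencil (a line in the dual space) meets the dual quadric $\mathcal{Q}_1^{\ast}$ in at most two points to force $u_2^{\ast}=u_2$ and hence $x_2^{\ast}=x_2$ (and likewise $y_2^{\ast}=y_2$). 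You are right, and commendably explicit, that this needs mild non-degeneracy ($u_1\neq v_1$, $u_2\neq u_1$, the pencil not contained in $\mathcal{Q}_1^{\ast}$, which can happen for $d\ge 3$); these cases are handled by continuity or excluded, exactly as you say.

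The second part, however, contains a genuine gap. Your justification of the \emph{virtual} reflection at $x_1$ --- ``both sides leave $x_1$ toward the exterior of $\mathcal{Q}_1$, so no physical bounce takes place\dots this is precisely the definition of virtual reflection'' --- does not follow. The exterior of an ellipsoid meets \emph{both} open half-spaces cut out by the tangent hyperplane $u_1$ at $x_1$; if $y_1$ and $y_2$ both lay in the outer half-space, the configuration would be a perfectly real reflection off the convex side of $\mathcal{Q}_1$, even though both endpoints are exterior. What actually has to be shown is that $y_1$ and $y_2$ lie on \emph{opposite} sides of $u_1$, equivalently that exactly one of the segments $x_1y_1$, $x_1y_2$ enters the interior of $\mathcal{Q}_1$; this is a real geometric assertion requiring an argument (e.g.\ a continuity/deformation argument or an explicit analysis of where the second tangent hyperplane from the pencil touches $\mathcal{Q}_1$), and it is the whole content of the word ``virtual'' in the claim. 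By contrast, your treatment of the real reflections at $y_1,y_2$ is fine: $x_1,x_2$ lie in the convex interior of $\mathcal{Q}_2$, which lies strictly on one side of the tangent hyperplane at $y_1$, so both segments are on the same (inner) side.
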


The statement converse to Theorem \ref{th:virt.refl} is the following
\begin{proposition}\label{prop:drc-ellipsoids}
In the Euclidean space $\mathbf{E}^d$, two confocal ellipsoids $\mathcal{E}_1$ and $\mathcal{E}_2$ are given.
Let points $X_1$, $X_2$ belong to $\mathcal{E}_1$, $Y_1$, $Y_2$ to $\mathcal{E}_2$, and let $\alpha_1$, $\alpha_2$, $\beta_1$, $\beta_2$ be the corresponding tangent planes.
If a quadruple $X_1, X_2, Y_1, Y_2$ is a virtual reflection configuration, then planes $\alpha_1$, $\alpha_2$, $\beta_1$, $\beta_2$ belong to a pencil.
\end{proposition}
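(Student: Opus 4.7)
The plan is to reduce this converse statement to the Double Reflection Theorem (Theorem \ref{th:DRT}) via a uniqueness argument for the reflection law. Begin with the line $\ell_1 := X_1Y_1$, which intersects $\mathcal{E}_1$ at $X_1$ and $\mathcal{E}_2$ at $Y_1$. Apply Theorem \ref{th:DRT} with $\mathcal{Q}_1 = \mathcal{E}_1$, $\mathcal{Q}_2 = \mathcal{E}_2$ to this data. This produces, via the pole constructions, points $X_2^* \in \mathcal{E}_1$ and $Y_2^* \in \mathcal{E}_2$ together with lines $\ell_2, \ell_1', \ell_2'$ obeying the four reflection laws needed for a virtual reflection configuration. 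Moreover, by the remark following Theorem \ref{th:DRT}, the four tangent planes at $X_1, X_2^*, Y_1, Y_2^*$ belong to a pencil. So it suffices to identify $X_2^* = X_2$ and $Y_2^* = Y_2$.

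The identification rests on the uniqueness of the reflected line at a point on a quadric: given $\ell_1$ through $X_1$, the definition of the reflection law determines $\ell_2$ as the unique fourth harmonic conjugate of the three coplanar lines $\ell_1$, $X_1 z$, $\ell$ in the plane spanned by $\ell_1$ and $X_1 z$. By hypothesis, the line $X_1 Y_2$ obeys the reflection law at $X_1$ on $\mathcal{E}_1$ paired with $\ell_1 = X_1Y_1$, hence it must coincide with the reflected line $X_1 Y_2^*$ produced by Theorem \ref{th:DRT}. Since $Y_2$ and $Y_2^*$ are each the second intersection of this common line with $\mathcal{E}_2$, distinct from $Y_1$, they coincide. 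The symmetric argument applied to the reflection law at $Y_1$ on $\mathcal{E}_2$ (pairing $Y_1 X_1$ with $Y_1 X_2$ versus $Y_1 X_2^*$) shows $X_2 = X_2^*$. This closes the argument.

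The main obstacle is disposing of degenerate configurations: one needs $X_2 \neq X_1$, $Y_2 \neq Y_1$ (so that the second intersection points are unambiguously defined), and one needs the pole constructions in Theorem \ref{th:DRT} to be non-degenerate, i.e.~$\ell_1$ should not be tangent to $\mathcal{E}_2$ and the auxiliary poles $z_1, w_1$ should lie in general position. Under the hypothesis of a genuine virtual reflection configuration these conditions are automatic; the remaining tangential cases can be handled by a limiting argument or by treating the tangency case separately via Corollary \ref{cor:ORT}. Note that although the proposition is stated over the reals and in the Euclidean setting, the proof is essentially projective: the Euclidean hypothesis is only used to ensure that the confocal family $\mathcal{E}_1, \mathcal{E}_2$ is realised concretely so that Theorem \ref{th:DRT} applies, and the conclusion that the tangent planes belong to a pencil is then a direct consequence.
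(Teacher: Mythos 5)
Your overall strategy --- run Theorem \ref{th:DRT} on $\ell_1=X_1Y_1$ to produce a configuration $X_1,X_2^*,Y_1,Y_2^*$ whose tangent planes lie in a pencil, then argue that the given configuration must coincide with it --- is a reasonable plan, and the first half is fine. The gap is in the identification step. Uniqueness of the reflection law at $X_1$ does give you that the \emph{lines} $X_1Y_2$ and $X_1Y_2^*$ coincide (the reflected line is the fourth harmonic in the plane spanned by $\ell_1$ and the pole $z$, so it is indeed unique). But that line meets $\mathcal{E}_2$ in \emph{two} points, and your justification for picking out the right one --- ``$Y_2$ and $Y_2^*$ are each the second intersection of this common line with $\mathcal{E}_2$, distinct from $Y_1$'' --- is vacuous: $Y_1$ does not lie on the line $X_1Y_2$ at all. (You appear to have transplanted the clause ``$y_2\neq y_1$'' from Theorem \ref{th:DRT}, where it refers to the second intersection of the auxiliary line $y_1z_1$ with $\mathcal{Q}_2$, not of $\ell_2$.) By the One Reflection Theorem the two points of $\ell_2\cap\mathcal{E}_2$ are $Y_2^*$ (paired with $Y_1$ through the pole $z$) and a second point $Y_2^{*\prime}$ (paired with the other intersection $Y_1'$ of $\ell_1$ with $\mathcal{E}_2$), and nothing you have said excludes $Y_2=Y_2^{*\prime}$. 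The same two-fold ambiguity occurs for $X_2$. As written, your argument determines the four lines of the configuration but not the four points, whereas the conclusion about the tangent planes is a statement about the points.

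To close the gap you must use the two reflection conditions you never invoke --- those at $X_2$ and at $Y_2$ --- to rule out the three ``mixed'' choices of $(X_2,Y_2)$ in $\{X_2^*,X_2^{*\prime}\}\times\{Y_2^*,Y_2^{*\prime}\}$ other than $(X_2^*,Y_2^*)$. This is also where your closing claim that ``the proof is essentially projective'' should give you pause: the paper states the forward implication (Theorem \ref{th:virt.refl}) projectively but deliberately restricts this converse to real confocal \emph{ellipsoids} in $\mathbf{E}^d$, which strongly suggests that excluding the spurious configurations uses the real/Euclidean structure (for instance the distinction between real and virtual reflections for nested ellipsoids), not just harmonic conjugation. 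The proof is therefore incomplete at exactly the step that carries the content of the proposition.
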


The next proposition shows that three lines of a double reflection configuration uniquely determine the fourth one.

\begin{proposition}\label{prop:drc.quad}
Let $\ell$, $\ell_1$, $\ell_2$ be lines and $\mathcal{Q}_1$, $\mathcal{Q}_2$ quadrics in the projective space.
Suppose that $\ell$, $\ell_1$ satisfy the reflection law on $\mathcal{Q}_1$,
and $\ell$, $\ell_2$ the reflection law on $\mathcal{Q}_2$, with respect to the confocal system determined by these two quadrics.
Then there is unique line $\ell_{12}$ such that four lines 
$\ell$, $\ell_1$, $\ell_2$, $\ell_{12}$ form a double reflection configuration.
\end{proposition}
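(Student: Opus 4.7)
The strategy is to reduce the statement to the Double Reflection Theorem (Theorem \ref{th:DRT}). The hypotheses pin down two canonical reflection points: since $\ell,\ell_1$ satisfy the reflection law on $\mathcal{Q}_1$, they share a point $x_1\in\mathcal{Q}_1$, namely $x_1=\ell\cap\ell_1$; similarly, $y_1=\ell\cap\ell_2$ lies on $\mathcal{Q}_2$. In particular $\ell$ meets both quadrics, so by Corollary \ref{cor:ORT} applied to each of the two reflecting pairs, $\ell_1$ also meets $\mathcal{Q}_2$ in two points, and $\ell_2$ meets $\mathcal{Q}_1$ in two points.

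Following the construction in Theorem \ref{th:DRT}, let $z_1$ be the pole with respect to $\mathcal{Q}_2$ of the tangent hyperplane to $\mathcal{Q}_1$ at $x_1$, and $w_1$ the pole with respect to $\mathcal{Q}_1$ of the tangent hyperplane to $\mathcal{Q}_2$ at $y_1$. I define $y_2$ as the second intersection of $y_1z_1$ with $\mathcal{Q}_2$ and $x_2$ as the second intersection of $w_1x_1$ with $\mathcal{Q}_1$; by Theorem \ref{th:ORT}, these points lie on $\ell_1$ and $\ell_2$ respectively. Set $\ell_{12}:=x_2y_2$. The existence part of the proposition then follows by direct application of Theorem \ref{th:DRT} to $\ell$, $x_1$, $y_1$: it guarantees that the four pairs $(\ell,\ell_1)$, $(\ell,\ell_2)$, $(\ell_1,\ell_{12})$, $(\ell_2,\ell_{12})$ obey the reflection law, which is the first condition in Definition \ref{def:VRC}; the remark immediately after Theorem \ref{th:DRT} records that the four tangent hyperplanes at the reflection points belong to a pencil, which is the second condition. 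Thus $(x_1,x_2,y_1,y_2)$ is a double reflection configuration.

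For uniqueness, suppose $\tilde\ell_{12}$ is another line completing $(\ell,\ell_1,\ell_2)$ to a double reflection configuration with associated points $(x_1,\tilde x_2,y_1,\tilde y_2)$. By Definition \ref{def:VRC}, $\tilde x_2\in\ell_2\cap\mathcal{Q}_1$ and $\tilde y_2\in\ell_1\cap\mathcal{Q}_2$. Each of these intersections consists of at most two points; $\tilde x_2=x_1$ would force $\tilde\ell_{12}$ to pass through $x_1$ and hence coincide with $\ell_1$, contradicting the requirement that the four lines be the distinct sides of a quadrilateral, and symmetrically for $\tilde y_2=y_1$. Hence $\tilde x_2=x_2$ and $\tilde y_2=y_2$, so $\tilde\ell_{12}=x_2y_2=\ell_{12}$.

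I expect the main obstacles to be notational bookkeeping, in particular the careful matching between the labels in the proposition and those in Theorem \ref{th:DRT}, and handling degenerate configurations (such as $\ell$ tangent to one of the quadrics, or reflection points lying on $\mathcal{Q}_1\cap\mathcal{Q}_2$) in which the "second intersection points" collapse and the uniqueness argument would need to be refined; the statement is cleanest when read for the generic projective configuration, and the degenerate cases can be treated by continuity or by direct inspection using the explicit poles $z_1$, $w_1$.
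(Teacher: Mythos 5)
Your existence argument is sound and runs roughly parallel to the paper's, just through a different door: you invoke the Double Reflection Theorem (Theorem \ref{th:DRT}) together with the remark that the four tangent hyperplanes at the reflection points belong to a pencil, whereas the paper constructs $\ell_{12}$ by taking, in the pencil of hyperplanes spanned by the tangent hyperplane $\alpha$ to $\mathcal{Q}_1$ at $\ell\cap\ell_1$ and the tangent hyperplane $\beta$ to $\mathcal{Q}_2$ at $\ell\cap\ell_2$, the second tangent $\alpha_1$ to $\mathcal{Q}_1$ and the second tangent $\beta_1$ to $\mathcal{Q}_2$, and joining their touching points, with Theorem \ref{th:virt.refl} supplying the reflection laws. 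Both routes are legitimate for existence.

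The uniqueness part, however, has a genuine gap. You correctly note that $\tilde x_2\in\ell_2\cap\mathcal{Q}_1$ and $\tilde y_2\in\ell_1\cap\mathcal{Q}_2$, each a two-point set, but you then dismiss the wrong alternative. The line $\ell_2$ passes through $y_1$, not through $x_1$, so the two points of $\ell_2\cap\mathcal{Q}_1$ are $x_2$ and a second point $x_2''$ which in general equals neither $x_1$ nor $x_2$ (by the One Reflection Theorem, $x_2''$ lies on the line joining $w_1$ to the \emph{second} intersection of $\ell$ with $\mathcal{Q}_1$); likewise $\ell_1\cap\mathcal{Q}_2=\{y_2,y_2'\}$ with $y_1\notin\ell_1$. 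Ruling out $\tilde x_2=x_1$ and $\tilde y_2=y_1$ therefore excludes cases that do not arise, while the genuine competitors $x_2''$ and $y_2'$ are left standing. What actually eliminates them is the pencil condition in Definition \ref{def:VRC}, which your uniqueness argument never uses: the tangent hyperplane to $\mathcal{Q}_1$ at $\tilde x_2$ must lie in the pencil spanned by $\alpha$ and $\beta$, and a generic pencil of hyperplanes contains exactly one tangent hyperplane to $\mathcal{Q}_1$ other than $\alpha$, namely $\alpha_1$; hence $\tilde x_2$ is forced to be the touching point of $\alpha_1$, and symmetrically $\tilde y_2$ that of $\beta_1$. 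This is precisely the mechanism on which the paper's proof rests, and without it your uniqueness claim does not go through.
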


\begin{proof}
Let $\alpha$ be the tangent hyperplane to $\mathcal{Q}_1$ at the intersection point of 
$\ell$ and $\ell_1$, and $\beta$ the tangent to $\mathcal{Q}_2$ at the intersection of 
$\ell$ and $\ell_2$.
In the pencil of hyperplanes determined by $\alpha$ and $\beta$, take tangents 
$\alpha_1$ to $\mathcal{Q}_1$ and $\beta_1$ to $\mathcal{Q}_2$,
$\alpha_1\neq\alpha$, $\beta_1\neq\beta$.
By Theorem \ref{th:virt.refl}, $\ell_{12}$ will be the line containing touching points of $\alpha_1$ and $\beta_1$ with the corresponding quadrics.
\end{proof}

\begin{remark}\label{rem:drc.quad}
Proposition \ref{prop:drc.quad} shows that double reflection configuration is playing the role of the quad-equation for lines in the projective space.
\end{remark}

In next section, we are going to show that double reflection configuration is $3D$-consistent.

\section{Billiard algebra, construction of quad-graphs, and 3D-consistency}
\label{sec:billiard}
In this section we make a quad-graph interpretation of some results obtained using billiard algebra.
This algebra is constructed by the authors -- for details of the construction, see 
\cite{DragRadn2008,DragRadn2011book}.

Let us start from a theorem on confocal families of quadrics from \cite{DragRadn2008}:

\begin{theorem}[Six-pointed star theorem]\label{th:zvezda}
Let $\mathcal F$ be a family of confocal quadrics in $\mathbf P^3$.
There exist configurations consisting of twelve planes in $\mathbf{P}^3$ with the following properties:
\begin{itemize}
\item
The planes may be organized in eight triplets, such that each plane in a triplet is tangent to a different quadric from $\mathcal F$ and the three touching points are collinear.
Every plane in the configuration is a member of two triplets.

\item
The planes may be organized in six quadruplets, such that the planes in each quadruplet belong to a pencil and they are tangent to two different quadrics from $\mathcal F$.
Every plane in the configuration is a member of two quadruplets.
\end{itemize}
Moreover, such a configuration is determined by three planes tangent to three different quadrics from $\mathcal F$, with collinear touching points.
\end{theorem}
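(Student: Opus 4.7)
The plan is to construct the configuration inductively from the initial triple and verify its incidence properties by identifying the twelve planes, eight triplets, and six quadruplets with the twelve edges, eight vertices, and six faces of a $3$-cube. I would decree that all edges parallel to the $i$-th coordinate direction carry planes tangent to the same quadric $\mathcal{Q}_i\in\mathcal F$, which makes each vertex-triplet involve three distinct quadrics from $\mathcal F$ and each face-quadruplet exactly two, as required. Every plane then automatically lies in two triplets (its two endpoints) and two quadruplets (its two adjacent faces). The hypothesis places the planes $\tau_1,\tau_2,\tau_3$ at the edges meeting the cube vertex $000$, tangent to $\mathcal{Q}_1,\mathcal{Q}_2,\mathcal{Q}_3$ at collinear points $p_1,p_2,p_3$.

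For each pair $(i,j)$ the chord $p_ip_j$ meets $\mathcal{Q}_i$ at $p_i$ and $\mathcal{Q}_j$ at $p_j$, so the Double Reflection Theorem (Theorem \ref{th:DRT}) produces a full reflection quadrilateral whose four tangent planes lie in a pencil by the remark following that theorem, hence form a virtual (indeed double) reflection configuration by Theorem \ref{th:virt.refl}. I assign this quadruplet to the face through $000$ in directions $i,j$, picking up two new planes per face and nine planes in total, with the three new edges sitting at vertices $e_1,e_2,e_3$. Proposition \ref{prop:drc.quad} guarantees that at each step the construction is deterministic. The remaining three edges at the antipodal vertex $111$ are then produced by the same quadruplet construction applied to the three faces adjacent to $111$, using pairs of already-constructed planes at the vertices $e_i+e_j$.

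The main obstacle is the closure of the cube: the three planes at $111$ produced via the three faces meeting at $111$ must coincide pairwise, and at each intermediate vertex ($e_i$ or $e_i+e_j$) the three edge-planes must still have \emph{collinear} touching points, since only then does the triplet structure of the claim hold. Both are $3$D-consistency statements for the double-reflection operation on tangent planes, which is exactly the operational consistency encoded in the billiard algebra developed in \cite{DragRadn2008}. I would therefore propagate the collinearity of the triplet at $000$ to the triplet at each $e_i$ by applying the Double Reflection Theorem twice along an adjacent face and tracking touching points, iterate to the triplets at the vertices $e_i+e_j$, and finish at $111$ by checking that the three independent constructions from the three faces agree. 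The algebraic relations of the billiard algebra are designed precisely to make this verification clean, and provide the cleanest way to handle the key closure step that is the substantive geometric content of the theorem.
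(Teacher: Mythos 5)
Your combinatorial setup is exactly the right one and matches how the configuration is organized in the paper: twelve planes on the edges of a $3$-cube, triplets at vertices, quadruplets on faces, with parallel edges tangent to the same quadric. The construction of the nine planes reachable from the initial vertex via Theorem \ref{th:DRT} and Proposition \ref{prop:drc.quad} is also sound. Note, moreover, that the collinearity at the intermediate vertices is automatic once you keep track of the \emph{lines} $x, x_1, x_2, x_3, x_{12},\dots$ attached to the vertices (the touching points of the planes at a vertex all lie on the single line assigned to that vertex, since each new plane arises as a tangent plane at a reflection point of that line), so that part of your worry dissolves without any extra propagation argument.

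The genuine gap is the closure at the antipodal vertex. You correctly isolate it as the substantive geometric content of the theorem, but you do not prove it: you assert that it is ``exactly the operational consistency encoded in the billiard algebra'' and that the algebra ``is designed precisely to make this verification clean.'' That is a restatement of the claim, not an argument. Worse, inside the logical architecture of this paper the appeal is circular: the $3$D-consistency statement (Theorem \ref{th:3d}) is \emph{deduced from} Theorem \ref{th:zvezda}, so you cannot invoke $3$D-consistency of the double-reflection quad-relation to close the cube. The paper itself does not reprove the Six-pointed star theorem; it imports it from \cite{DragRadn2008}, where the coincidence of the three candidate completions at the last vertex is established through the explicit operations and identities of the billiard algebra (the lines of the configuration being $\mathcal{O}, p, q, s, -x, p_1, q_1, x+s$ in the notation of that paper). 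To complete your proof you would need either to carry out that algebraic verification or to give an independent projective-geometric argument that the three double-reflection completions of the faces meeting the last vertex yield the same three planes; as written, the key step is assumed rather than proved.
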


The configuration of the planes in the dual space $\mathbf{P}^{3*}$ is shown on Figure \ref{fig:zvezda}: each plane corresponds to a vertex of the polygonal line.
\begin{figure}[h]
\centering
\begin{pspicture}(-4.5,-3.5)(4.5,3.5)
\psset{fillstyle=none, linecolor=black, linestyle=solid, linewidth=1.1pt}

\psline(-4,-1.5)(2,-1.5)(2,3)(-4,-1.5)
\psline(-2,1.5)(-2,-3)(4,1.5)(-2,1.5)

\pscircle*(-4,-1.5){0.08}
\rput(-4,-1.8){$x_2x_{12}$}

\pscircle*(-2,-1.5){0.08}
\rput(-2.5,-1.8){$x_2x_{23}$}

\pscircle*(0,-1.5){0.08}
\rput(0.4,-1.8){$x_{23}x_{123}$}

\pscircle*(2,-1.5){0.08}
\rput(2.2,-1.8){$x_{12}x_{123}$}

\pscircle*(4,1.5){0.08}
\rput(4,1.8){$x_3x_{13}$}

\pscircle*(2,1.5){0.08}
\rput(2.6,1.8){$x_1x_{13}$}

\pscircle*(0,1.5){0.08}
\rput(-0.3,1.8){$xx_1$}

\pscircle*(-2,1.5){0.08}
\rput(-2.2,1.8){$xx_3$}

\pscircle*(-2,0){0.08}
\rput(-2.5,0){$xx_2$}

\pscircle*(2,0){0.08}
\rput(2.8,0){$x_{13}x_{123}$}

\pscircle*(2,3){0.08}
\rput(2.2,3.3){$x_1x_{12}$}

\pscircle*(-2,-3){0.08}
\rput(-2,-3.3){$x_{3}x_{23}$}

\end{pspicture}
\caption{The configuration of planes}\label{fig:zvezda}
\end{figure}
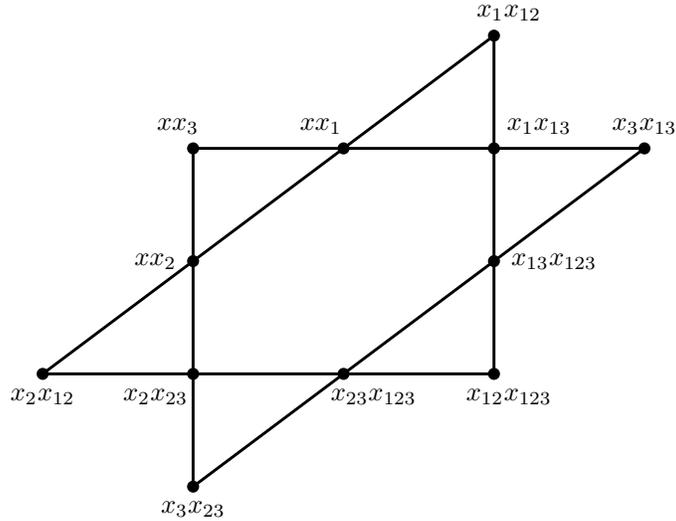

To understand the notation used on Figure \ref{fig:zvezda}, let us recall the construction leading to configurations from Theorem \ref{th:zvezda}.
Take $\mathcal{Q}_1$, $\mathcal{Q}_2$, $\mathcal{Q}_3$ to be quadrics from 
$\mathcal{F}$, and $\alpha$, $\beta$, $\gamma$ respectively their tangent planes such that the touching points $A$, $B$, $C$ are collinear.
Denote by $x$ the line containing these three points, and by $x_1$, $x_2$, $x_3$ lines obtained from $x$ by reflections of $\mathcal{Q}_1$, $\mathcal{Q}_2$, $\mathcal{Q}_3$ at $A$, $B$, $C$ respectively.

Now, as in Proposition \ref{prop:drc.quad}, determine lines
$x_{12}$, $x_{13}$, $x_{23}$, $x_{123}$ such that they respectively complete triplets
$\{x,x_1,x_2\}$, $\{x,x_1,x_3\}$, $\{x,x_2,x_3\}$, $\{x_3,x_{13},x_{23}\}$
to double reflection configurations.\footnote{Let us note that in \cite{DragRadn2008}, lines $x$, $x_1$, $x_2$, $x_3$, $x_{12}$, $x_{13}$, $x_{23}$, $x_{123}$ were respectively denoted by $\mathcal{O}$, $p$, $q$, $s$, $-x$, $p_1$, $q_1$, $x+s$, where the addition is defined in the billiard algebra introduced in that paper.}

Notice the following objects on Figure \ref{fig:zvezda}:
\begin{description}
\item[twelve verteces]
to each vertex a plane tangent to one of three quadrics
$\mathcal{Q}_1$, $\mathcal{Q}_2$, $\mathcal{Q}_3$
and a pair of lines are joined --- the lines of any pair are reflected to each other on the corresponding tangent plane;

\item[eight triangles]
three planes joined to verteces of any triangle are tangent to three quadrics
$\mathcal{Q}_1$, $\mathcal{Q}_2$, $\mathcal{Q}_3$ with touching points being collinear --- thus to each triangle, the line containing these points is naturaly joined ;

\item[six edges]
each edge containes four verteces --- four planes joined to these verteces are in the same pencil and a double reflection configuration corresponds to each edge.
\end{description}

Now, we are ready to prove the $3D$-consistency of the quad-relation introduced via double reflection configurations.

\begin{theorem}\label{th:3d}
Let $x$, $x_1$, $x_2$, $x_3$ be lines in the projective space, such that $x_1$, $x_2$, $x_3$ are obtained from $x$ by reflections on confocal quadrics
$\mathcal{Q}_1$, $\mathcal{Q}_2$, $\mathcal{Q}_3$ respectively.
Introduce lines $x_{12}$, $x_{13}$, $x_{23}$, $x_{123}$ such that the following quadruplets are double reflection configurations:
\begin{gather*}
\{x,x_1,x_{12},x_2\}, \quad
\{x,x_1,x_{13},x_3\}, \quad
\{x,x_2,x_{23},x_3\}, \quad
\{x_1,x_{12},x_{123},x_{13}\}.
\end{gather*} 
Then the following quadruplets are also double reflection configurations:
$$
\{x_2,x_{12},x_{123},x_{23}\},\quad
\{x_3,x_{13},x_{123},x_{23}\}.
$$ 
\end{theorem}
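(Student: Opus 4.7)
The plan is to reduce the theorem to an application of the Six-pointed star theorem (Theorem \ref{th:zvezda}) together with the pencil characterisation of double reflection configurations encoded in Theorem \ref{th:virt.refl} and Proposition \ref{prop:drc.quad}.

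First I would read off from the hypothesis the three tangent hyperplanes $\alpha_1$, $\alpha_2$, $\alpha_3$ to $\mathcal{Q}_1$, $\mathcal{Q}_2$, $\mathcal{Q}_3$ at the reflection points $x\cap x_1$, $x\cap x_2$, $x\cap x_3$. Since these three touching points all lie on the common line $x$, they are collinear. Thus $\alpha_1,\alpha_2,\alpha_3$ form admissible input data for Theorem \ref{th:zvezda}, which then produces a canonical configuration of twelve planes in $\mathbf{P}^{3\ast}$ organised into eight triangle-triplets (each triplet furnishing collinear touching points on three different quadrics) and six pencil-quadruplets.

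Next I would identify each of the twelve planes of this configuration with a tangent hyperplane arising from a reflection between two of the eight lines $x,x_1,x_2,x_3,x_{12},x_{13},x_{23},x_{123}$, matching the labelling suggested by Figure \ref{fig:zvezda}. By Theorem \ref{th:virt.refl}, each pencil-quadruplet of the twelve planes corresponds to a quadruple of lines forming a double reflection configuration, and by Proposition \ref{prop:drc.quad} any three of those lines determine the fourth uniquely. The four DRCs assumed in the hypothesis exhaust four of the six pencil-quadruplets, so the lines $x_{12},x_{13},x_{23},x_{123}$ constructed in the hypothesis must coincide with the lines read off from the Six-pointed star configuration. The two pencil-quadruplets that remain are then precisely $\{x_2,x_{12},x_{123},x_{23}\}$ and $\{x_3,x_{13},x_{123},x_{23}\}$, and since their four tangent planes belong to a pencil by construction, Theorem \ref{th:virt.refl} yields that these quadruples of lines are double reflection configurations.

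The main obstacle is the identification step: one must verify that the line $x_{123}$ defined by the DRC $\{x_1,x_{12},x_{123},x_{13}\}$ coincides with the fourth line attached, inside the twelve-plane configuration, to the triangle-triplet built on $x_3$, $x_{13}$, $x_{23}$. This compatibility is the entire 3D-consistency content of the theorem, and it is guaranteed by the uniqueness clause of Theorem \ref{th:zvezda}, which asserts that the whole twelve-plane configuration is determined by the three initial tangent planes $\alpha_1,\alpha_2,\alpha_3$. Once this rigidity is invoked, the conclusion is immediate.
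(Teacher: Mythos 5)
Your proposal is correct and follows essentially the same route as the paper: both reduce the statement to the Six-pointed star theorem (Theorem \ref{th:zvezda}) together with the cube-structured labelling of its twelve planes by pairs of the eight lines, using the uniqueness of the fourth line in a double reflection configuration (Proposition \ref{prop:drc.quad}) to identify $x_{12}$, $x_{13}$, $x_{23}$, $x_{123}$ with the lines of that configuration. Your write-up in fact makes explicit the identification step that the paper's proof leaves implicit.
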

\begin{proof}
Let us remark that the configuration described in Theorem \ref{th:zvezda} has obviously a combinatorial structure of cube, with planes corresponding to edges of the cube.
In this way, lines $x$, $x_1$, $x_2$, $x_3$, $x_{12}$, $x_{13}$, $x_{23}$, $x_{123}$ will correspond to vertices of the cube as shown on Figure \ref{fig:cube}.
A pair of lines is represented by endpoints of an edge if they reflect to each other on the plane joined to this edge.
Faces of the cube represent double reflection configurations.
Notice also that planes joined to parallel edges of the cube are tangent to the same quadric.
The statement follows from Theorem \ref{th:zvezda} and the construction given after, see
Figure \ref{fig:zvezda}.
\end{proof}

\section{Double reflection nets}
\label{sec:drnets}
Assume a pencil of confocal quadrics is given in $\mathbf{P}^d$, and fix $d-1$ quadrics from the pencil.
Take $\mathcal{A}\subset\mathcal{L}^d$ to be the set of all lines touching these 
$d-1$ quadrics.

Notice that, by the Chasles theorem \cite{Chasles}, every line $\mathbf{P}^d$ is touching $d-1$ quadrics from a given confocal family.
Moreover, by Corollary \ref{cor:refl}, these $d-1$ quadrics are preserved by billiard reflection.
Confocal quadrics touched by a line are called \emph{caustics} of this line, or consequently, caustics of the billiard trajectory that contains the line.

\begin{definition}\label{def:dr-net}
\emph{A double reflection net} is a map 
\begin{equation}\label{eq:drnet}
\varphi\ :\ \mathbf{Z}^m \to \mathcal{A},
\end{equation}
such that there exist $m$ quadrics $\mathcal{Q}_1$, \dots, $\mathcal{Q}_m$ from the confocal pencil, satisfying the following conditions:
\begin{enumerate}
\item
sequence $\{\varphi(\mathbf{n}_0+i\mathbf{e}_j)\}_{i\in\mathbf{Z}}$ represents a billiard trajectory within $\mathcal{Q}_j$, for each $j\in\{1,\dots,m\}$ and 
$\mathbf{n}_0\in\mathbf{Z}^m$;

\item
lines $\varphi(\mathbf{n}_0)$, $\varphi(\mathbf{n}_0+\mathbf{e}_i)$,
$\varphi(\mathbf{n}_0+\mathbf{e}_j)$, $\varphi(\mathbf{n}_0+\mathbf{e}_i+\mathbf{e}_j)$ form a double reflection configuration, for all $i,j\in\{1,\dots,m\}$, $i\neq j$ and $\mathbf{n}_0\in\mathbf{Z}^m$.
\end{enumerate}
\end{definition}

In other words, for each edge in $\mathbf{Z}^m$ of direction $e_i$,
the lines corresponding to its vertices intersect at $\mathcal{Q}_i$,
and four tangent planes at the intersection points, associated to an elementary quadrilateral, belong to a pencil.

In the following Subsections \ref{sec:mink}--\ref{sec:const}, we describe some examples of double reflection nets.
After that, in Subsection \ref{sec:F}, we define $F$-transformations of double reflection nets and we conclude this section by establishing connection with Grassmanian Darboux nets from \cite{ABS2009} in Subsection \ref{sec:grass}.

\subsection{Example of a double reflection net in the Minkowski space}\label{ex:hip-tropic-light}
\label{sec:mink}
Consider three-dimensional Minkowski space $\mathbf{E}^{2,1}$, that is $\mathbf{R}^3$ with the Minkowski scalar product:
$$
\langle (x_1,y_1,z_1),(x_2,y_2,z_2)\rangle = x_1x_2 + y_1y_2 -z_1z_2.
$$
In this space, let a general confocal family be given (see \cite{KhTab2009}):
\begin{equation}\label{eq:confocal.quadrics3}
\mathcal{Q}_{\lambda}\ :\ \frac{x^2}{a-\lambda}+\frac{y^2}{b-\lambda}+\frac{z^2}{c+\lambda}=1,
\quad
\lambda\in\mathbf{R}.
\quad
(a>b>0,\ c>0),  
\end{equation}

In this family, one can observe four different geometric types of non-degenerated quadrics:
\begin{itemize}
\item
one-sheeted hyperboloids oriented along $z$-axis, for $\lambda\in(-\infty,-c)$;
\item
ellipsoids, corresponding to $\lambda\in(-c,b)$;
\item
one-sheeted hyperboloids oriented along $y$-axis, for $\lambda\in(b,a)$ (see Figure \ref{fig:hip-tropic-light});
\item
two-sheeted hyperboloids, for $\lambda\in(a,+\infty)$ -- these hyperboloids are oriented along $z$-axis.
\end{itemize}
In addition, there are four degenerated quadrics: $\mathcal{Q}_{a}$, $\mathcal{Q}_{b}$, $\mathcal{Q}_{-c}$, $\mathcal{Q}_{\infty}$, that is planes $x=0$, $y=0$, $z=0$, and the plane at the infinity respectively.

\begin{figure}[h]
\centering
\includegraphics[width=6cm, height=8.4cm]{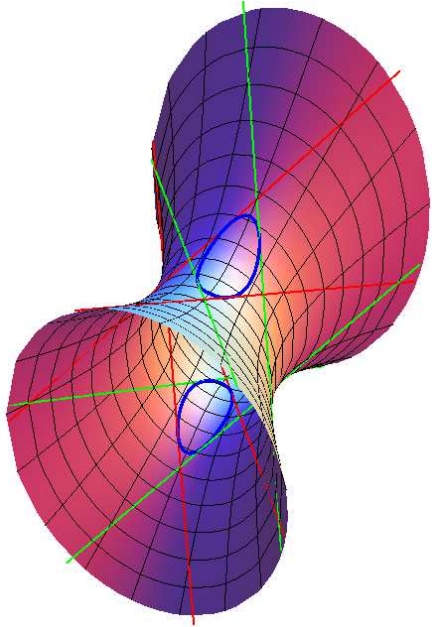}
\caption{The tropic curves and their light-like tangents on a one-sheeted hyperboloid oriented along $y$-axis from confocal family (\ref{eq:confocal.quadrics3})}\label{fig:hip-tropic-light}
\end{figure}

On each quadric, notice the \emph{tropic curves} -- the set of points where the induced metrics on the tangent plane is degenerate.

It is known that (see \cite{DragRadn2011}):
\begin{itemize}
\item
a tangent line to the tropic curve of a non-degenarate quadric of the family (\ref{eq:confocal.quadrics3}) is always space-like, except on a one-sheeted hyperboloid oriented along $y$-axis;
\item
tangent lines of a tropic on one-sheeted hyperboloids oriented along $y$-axis are light-like exactely at four points, while at other points of the tropic curve, the tangents are space-like;
\item
a tangent line to the tropic of a quadric from (\ref{eq:confocal.quadrics3}) belongs to the quadric if and only if it is light-like;
\item
each one-sheeted hyperboloid oriented along $y$-axis has exactly eight light-like generatrices, as shown on Figure \ref{fig:hip-tropic-light}.
\end{itemize}

Fix $\lambda_0\in(b,a)$, and consider hyperboloid $\mathcal{Q}_{\lambda_0}$.
Denote by $a_1$, $a_2$, $a_3$, $a_4$, $b_1$, $b_2$, $b_3$, $b_4$ the light-like generatrices of $\mathcal{Q}_{\lambda_0}$, in the following way (see Figure \ref{fig:tropic.tangents}):
\begin{figure}[h]
\centering
\psset{unit=1.2}
\begin{pspicture*}(-4.5,-4.5)(4.5,4.5)

\psset{viewpoint=-15 30 10,Decran=50,lightsrc=-5 10 50}


\psSolid[object=line,linecolor=green,linewidth=0.03,
args=0 -0.693 -2.946 1.673 1.732 0]

\psSolid[object=line,linecolor=red,linewidth=0.03,
args=0 -0.693 -2.946 -1.673 1.732 0]

\defFunction[algebraic]{tropic2}(t)
{(5-4.2)/(sqrt(5+2))*cos(t)}
{(3-4.2)/(sqrt(3+2))*sin(t)}
{-(2+4.2)*sqrt( ( (3+2)*(cos(t))^2+(5+2)*(sin(t))^2 )/((5+2)*(3+2)) ) }

\psSolid[object=courbe,
linecolor=blue,linewidth=0.04,
r=0,range=0 pi 2 mul, resolution=360,
function=tropic2](0,0,0)

\psSolid[object=line,linecolor=red,linewidth=0.03,
args=0 0.693 -2.946 1.673 -1.732 0]

\psSolid[object=line,linecolor=green,linewidth=0.03,
args=0 0.693 -2.946 -1.673 -1.732 0]

\psSolid[object=plan,
definition=equation, linewidth=0.02, linecolor=gray!70,
fillcolor=gray!30,opacity=0.9,ngrid=6 6,
args={[0 0 1 0]},
base=-1.673 1.673 -1.732 1.732]

\psSolid[object=line,linecolor=red,linewidth=0.03,
args=0 -0.693 2.946 1.673 1.732 0]

\psSolid[object=line,linecolor=green,linewidth=0.03,
args=0 -0.693 2.946 -1.673 1.732 0]

\defFunction[algebraic]{tropic1}(t)
{(5-4.2)*cos(t)/(sqrt(5+2))}
{(3-4.2)/(sqrt(3+2))*sin(t)}
{(2+4.2)*sqrt( ( (3+2)*(cos(t))^2+(5+2)*(sin(t))^2 )/((5+2)*(3+2)) ) }

\psSolid[object=courbe,
linecolor=blue,linewidth=0.04,
r=0,range=0 pi 2 mul,
resolution=360, function=tropic1]

\psSolid[object=line,linecolor=red,linewidth=0.03,
args=0 0.693 2.946 -1.673 -1.732 0]

\psSolid[object=line,linecolor=green,linewidth=0.03,
args=0 0.693 2.946 1.673 -1.732 0]

\axesIIID[linewidth=0.04,arrowsize=2pt 3,arrowlength=2,arrowinset=0.4](0,0,0)(1,1,1)

\rput(-1.7,2){$a_1$}
\rput(1.8,2){$a_2$}

\rput(-0.5,2.3){$b_3$}
\rput(1,1.5){$b_4$}

\rput(-1.7,-2){$b_2$}
\rput(1.8,-1.8){$b_1$}

\rput(-0.45,-1.7){$a_4$}
\rput(0.65,-1.8){$a_3$}

\end{pspicture*}
\caption{The tropic curves of $\mathcal{Q}_{\lambda_0}$ and their light-like tangents.}\label{fig:tropic.tangents}
\end{figure}
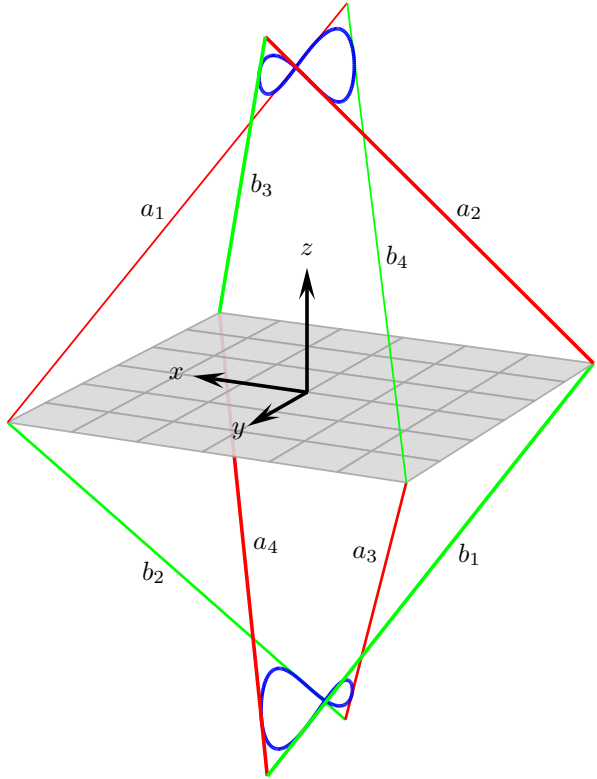
\begin{itemize}
\item
lines $a_i$ belong to one, and $b_i$ to the other family of generatrices of $\mathcal{Q}_{\lambda_0}$; that is, $a_i$ and $a_j$ are always skew for $i\neq j$, while $a_i$ and $b_j$ are coplanar for all $i,j$;

\item
$a_1$, $a_2$, $b_3$, $b_4$ are tangent to the tropic curve contained in the half-space $z>0$, while $a_3$, $a_4$, $b_1$, $b_2$ are touching the other tropic curve;

\item
$a_i$ is parallel to $b_i$ for each $i$;

\item
pairs $(a_1,b_2)$, $(a_2,b_1)$, $(a_3,b_4)$, $(a_4,b_3)$ have intersection points in the $xy$-plane;

\item
pairs $(a_1,b_3)$, $(a_2,b_4)$, $(a_3,b_1)$, $(a_4,b_2)$ have intersection points in the $xz$-plane;

\item
pairs $(a_1,b_4)$, $(a_2,b_3)$, $(a_3,b_2)$, $(a_4,b_1)$ have intersection points in the $yz$-plane.
\end{itemize}

Take $\mathcal{A}$ to be the set of all generatrices of hyperboloid 
$\mathcal{Q}_{\lambda_0}$, i.e.~ the set of all lines having $\mathcal{Q}_{\lambda_0}$ as a double caustic.
In particular, $\mathcal{A}$ contains all lines $a_i$, $b_i$.

It is possible to define map
$$
\varphi_M\ :\ \mathbf{Z}^4\to\mathcal{A},
$$
such that the image of $\varphi_M$ is the set $\{a_1,a_2,a_3,a_4,b_1,b_2,b_3,b_4\}$ and
for each $\mathbf{n}\in\mathbf{Z}^4$ lines $\varphi_{M}(\mathbf{n}+\mathbf{e}_1)$,
$\varphi_{M}(\mathbf{n}+\mathbf{e}_2)$, $\varphi_{M}(\mathbf{n}+\mathbf{e}_3)$,
$\varphi_{M}(\mathbf{n}+\mathbf{e}_4)$ are obtained from $\varphi_{M}(\mathbf{n})$ by reflection on $\mathcal{Q}_{a}$, $\mathcal{Q}_{b}$, $\mathcal{Q}_{-c}$, $\mathcal{Q}_{\infty}$ respectively.

More precisely, $\varphi_M$ will be periodic with period $2$ in each coordinate and:
\begin{gather*}
\varphi_M(0,0,0,0)=\varphi_M(1,1,1,1)=a_1,\quad \varphi_M(1,1,0,0)=\varphi_M(0,0,1,1)=a_2,
\\
\varphi_M(1,0,1,0)=\varphi_M(0,1,0,1)=a_3,\quad \varphi_M(0,1,1,0)=\varphi_M(1,0,0,1)=a_4,
\\
\varphi_M(0,0,0,1)=\varphi_M(1,1,1,0)=b_1,\quad \varphi_M(1,1,0,1)=\varphi_M(0,0,1,0)=b_2,
\\
\varphi_M(1,0,1,1)=\varphi_M(0,1,0,0)=a_3,\quad \varphi_M(0,1,1,1)=\varphi_M(1,0,0,0)=b_4,
\end{gather*}

It is shown on Figure \ref{fig:hiper-kocka} how vertices of the unit tesseract in $\mathbf{Z}^4$ are mapped by $\varphi_M$.

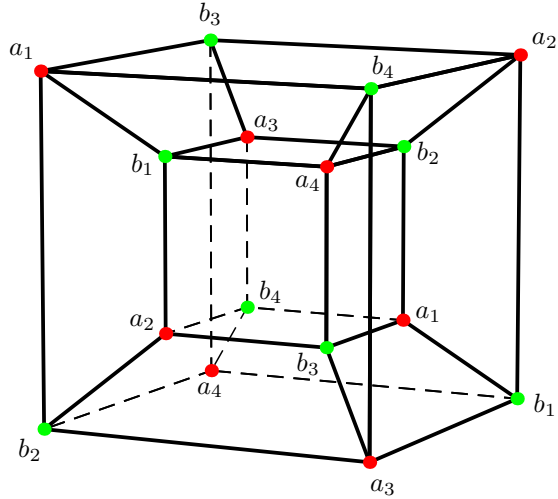
\begin{figure}[h]
\centering
\psset{unit=1}
\begin{pspicture*}(-4,-3.5)(4,3.5)

\psset{viewpoint=-15 30 5,Decran=50,lightsrc=-5 10 50}

\psset{linewidth=0.05}

\psSolid[
object=cube,
a=1.6,
action=draw]

\psSolid[
object=cube,
a=3.2,
action=draw]

\psSolid[object=line,
args=0.8 0.8 0.8 1.6 1.6 1.6]
\psSolid[object=line,
args=-0.8 0.8 0.8 -1.6 1.6 1.6]
\psSolid[object=line,
args=0.8 -0.8 0.8 1.6 -1.6 1.6]
\psSolid[object=line,
args=0.8 0.8 -0.8 1.6 1.6 -1.6]
\psSolid[object=line,
args=-0.8 -0.8 0.8 -1.6 -1.6 1.6]
\psSolid[object=line,
args=-0.8 0.8 -0.8 -1.6 1.6 -1.6]
\psSolid[object=line,linewidth=0.02,linestyle=dashed,
args=0.8 -0.8 -0.8 1.6 -1.6 -1.6]
\psSolid[object=line,
args=-0.8 -0.8 -0.8 -1.6 -1.6 -1.6]

\psset{linecolor=green}

\psSolid[object=point,
args=0.8 0.8 0.8]
\psSolid[object=point,
args=0.8 -0.8 -0.8]
\psSolid[object=point,
args=-0.8 -0.8 0.8]
\psSolid[object=point,
args=-0.8 0.8 -0.8]

\psSolid[object=point,
args=-1.6 1.6 1.6]
\psSolid[object=point,
args=1.6 -1.6 1.6]
\psSolid[object=point,
args=-1.6 -1.6 -1.6]
\psSolid[object=point,
args=1.6 1.6 -1.6]

\psset{linecolor=red}

\psSolid[object=point,
args=-0.8 0.8 0.8]
\psSolid[object=point,
args=-0.8 -0.8 -0.8]
\psSolid[object=point,
args=0.8 -0.8 0.8]
\psSolid[object=point,
args=0.8 0.8 -0.8]

\psSolid[object=point,
args=1.6 1.6 1.6]
\psSolid[object=point,
args=-1.6 -1.6 1.6]
\psSolid[object=point,
args=1.6 -1.6 -1.6]
\psSolid[object=point,
args=-1.6 1.6 -1.6]

\rput(-1.9,-1.1){$a_2$}
\rput(0.3,-1.65){$b_3$}
\rput(-1.9,1){$b_1$}
\rput(0.3,0.75){$a_4$}
\rput(1.9,-1){$a_1$}
\rput(-0.2,-0.7){$b_4$}
\rput(-0.25,1.55){$a_3$}
\rput(1.9,1.15){$b_2$}

\rput(-3.4,-2.8){$b_2$}
\rput(-3.5,2.5){$a_1$}
\rput(-1,3){$b_3$}
\rput(3.45,2.65){$a_2$}
\rput(3.45,-2.2){$b_1$}
\rput(-1,-2){$a_4$}
\rput(1.3,-3.3){$a_3$}
\rput(1.3,2.3){$b_4$}
\end{pspicture*}
\caption{Mapping $\varphi_M$ on the unit tesseract. }\label{fig:hiper-kocka}
\end{figure}

It is straightforward to prove the following
\begin{proposition}
$\varphi_M$ is a double reflection net.
\end{proposition}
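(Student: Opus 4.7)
The claim reduces to verifying the two conditions of Definition \ref{def:dr-net} for the explicit map $\varphi_M$. The crucial simplification is that the four quadrics involved, $\mathcal{Q}_a$, $\mathcal{Q}_b$, $\mathcal{Q}_{-c}$, $\mathcal{Q}_\infty$, are all degenerate --- the three coordinate planes and the plane at infinity, each viewed as a double plane. On such a quadric, the projective reflection law from Section \ref{sec:drc} reduces to ordinary reflection in the underlying plane, and the tangent hyperplane at any point of the quadric is the plane itself.

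For condition (1), I would check direction by direction that the transition $\varphi_M(\mathbf{n})\mapsto\varphi_M(\mathbf{n}+\mathbf{e}_j)$ is indeed the reflection on the correct degenerate quadric $\mathcal{Q}_j$. Using the classification listed just after the generatrices $a_i,b_i$ are introduced (which pairs meet in $xy$-, $xz$-, $yz$-plane) and the fact that $a_i\parallel b_i$, one matches: reflection in $x=0$ swaps the pairs $(a_1,b_4),(a_2,b_3),(a_3,b_2),(a_4,b_1)$, which is precisely the rule ``toggle the first coordinate'' in the table of values; similarly for $y=0$, $z=0$, and the plane at infinity (where parallel pairs $(a_i,b_i)$ are exchanged). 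Since $\varphi_M$ has period $2$ in each coordinate, the sequence along each axis is a $2$-periodic orbit, which is a legitimate billiard trajectory on the corresponding degenerate quadric.

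For condition (2), I would fix a pair of directions $\mathbf{e}_i,\mathbf{e}_j$ with $i\ne j$ and any basepoint $\mathbf{n}_0$, and show that the resulting four lines form a double reflection configuration in the sense of Definition \ref{def:VRC}. The reflection laws along the four edges are already ensured by condition (1). Closure of the quadrilateral --- i.e., that the two routes $\mathbf{n}_0\to\mathbf{n}_0+\mathbf{e}_i\to\mathbf{n}_0+\mathbf{e}_i+\mathbf{e}_j$ and $\mathbf{n}_0\to\mathbf{n}_0+\mathbf{e}_j\to\mathbf{n}_0+\mathbf{e}_i+\mathbf{e}_j$ land on the same line --- follows because reflections in two distinct coordinate planes, or in a coordinate plane and the plane at infinity, commute as involutions on the set of light-like generatrices; this can also be read off directly from the tabulated values. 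The pencil condition on tangent planes is essentially free: the tangent plane to a double plane at any of its points is the plane itself, so the four relevant tangent planes collapse to at most two distinct planes (namely $\mathcal{Q}_i$ and $\mathcal{Q}_j$), and any two distinct planes in $\mathbf{P}^3$ automatically belong to a pencil.

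The main point where one must be a little careful is the interpretation of the projective reflection law and the pencil condition on degenerate quadrics; once that is set up, the remaining work is a finite combinatorial check which is reduced by the period-$2$ symmetry to a small number of elementary $2$-faces, all handled by inspection of the explicit table of values.
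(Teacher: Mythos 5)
The paper offers no proof of this proposition---it is introduced only with ``It is straightforward to prove the following''---so there is no argument to compare against line by line; your plan is precisely the direct verification the authors evidently intend, and it is sound: you correctly isolate the only delicate points, namely that the projective reflection law on the degenerate quadrics $x=0$, $y=0$, $z=0$ and the plane at infinity reduces to mirror (respectively central) symmetry, and that on each elementary face the four tangent planes collapse to two distinct planes, which trivially lie in a pencil. Your matching of coordinate toggles with the listed intersection pairs agrees with the table (modulo the paper's typo: $\varphi_M(1,0,1,1)=\varphi_M(0,1,0,0)$ should equal $b_3$, not $a_3$, which your reading implicitly corrects); the only framing quibble is that there is no ``closure'' to establish, since $\varphi_M$ is given explicitly on all of $\mathbf{Z}^4$---one simply checks each type of elementary face against Definition \ref{def:VRC}, which is what your inspection of the table amounts to.
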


\subsection{Poncelet-Darboux grids and double reflection nets}
Let $\mathcal{E}$ be an ellipse in the Euclidean plane:
$$
\mathcal{E}\ :\ \frac{x^2}{a}+\frac{y^2}{b},\quad a>b>0,
$$
and $(a_i)_{i\in\mathbf{Z}}$ a billiard trajectory within $\mathcal{E}$.

As it is well known, all lines $a_i$ are touching the same conic $\mathcal{C}$ confocal with $\mathcal{E}$.
Here, we will additionally suppose that $\mathcal{C}$ is an ellipse.
Denote by $\mathcal{A}$ the set of tangents of $\mathcal{C}$.

Fix $m$ positive integers $k_1$, \dots, $k_{m}$ and define mapping:
$$
\varphi_D\ :\ \mathbf{Z}^{m}\to\mathcal{A},
\quad
\varphi_D(n_1,\dots,n_m)=a_{n_1k_1+\dots+n_mk_{m}}.
$$

\begin{proposition}\label{prop:phiD}
Map $\varphi_D$ is a double reflection net.
\end{proposition}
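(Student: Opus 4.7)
The plan is to verify directly the two conditions of Definition~\ref{def:dr-net}. The first ingredient, which I borrow from the treatment of Poncelet-Darboux grids in \cite{DragRadn2008}, is the following: for any positive integer $k$ and any billiard trajectory $(a_i)$ within $\mathcal{E}$ with caustic $\mathcal{C}$, there exists a conic $\mathcal{Q}^{(k)}$ confocal with $\mathcal{E}$ and $\mathcal{C}$ such that, for every $N$, the lines $a_N$ and $a_{N+k}$ meet at a point of $\mathcal{Q}^{(k)}$ and obey the reflection law there with respect to the confocal pencil. In other words, iterating the billiard reflection $k$ times amounts to a single billiard reflection on a confocal conic whose choice is independent of the initial segment. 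I set $\mathcal{Q}_j := \mathcal{Q}^{(k_j)}$ for $j=1,\dots,m$.

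For condition~(1), fix $\mathbf{n}_0 \in \mathbf{Z}^m$ and $j \in \{1,\dots,m\}$, and let $N = n_1^0 k_1+\dots+n_m^0 k_m$. Then $\{\varphi_D(\mathbf{n}_0+i\mathbf{e}_j)\}_{i\in\mathbf{Z}} = (a_{N+i k_j})_{i\in\mathbf{Z}}$, and the auxiliary fact applied with $k=k_j$ shows that this subsequence is a billiard trajectory within $\mathcal{Q}_j$ (all its lines remain tangent to $\mathcal{C}$, so this is indeed a billiard trajectory in the sense of the confocal pencil). For condition~(2), I consider the quadruple $\ell=a_N$, $\ell_i=a_{N+k_i}$, $\ell_j=a_{N+k_j}$, $\ell_{ij}=a_{N+k_i+k_j}$ associated with an elementary quadrilateral in the $(i,j)$-direction. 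Using condition~(1) in the $i$-direction at $\mathbf{n}_0$ and at $\mathbf{n}_0+\mathbf{e}_j$ shows that the pairs $(\ell,\ell_i)$ and $(\ell_j,\ell_{ij})$ satisfy the reflection law at their intersection points on $\mathcal{Q}_{k_i}$; applying condition~(1) in the $j$-direction gives the analogous statement for $(\ell,\ell_j)$ and $(\ell_i,\ell_{ij})$ on $\mathcal{Q}_{k_j}$. This is precisely the list of reflection conditions in Definition~\ref{def:VRC}, so the quadruple is a virtual reflection configuration in the plane. Since we are working with a confocal family of genuine ellipses in the Euclidean plane, Proposition~\ref{prop:drc-ellipsoids} (applied in the planar sub-case, or equivalently to the restriction of the configuration to the plane of $\mathcal{E}$) upgrades this to a double reflection configuration, giving the pencil condition on the four tangent lines.

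The main obstacle is the auxiliary Poncelet-Darboux statement that identifies the conic $\mathcal{Q}^{(k)}$ governing the $k$-step subsequence of a billiard trajectory; everything else then follows from routine unfolding of the definitions. A conceptual proof of this auxiliary fact goes through the elliptic-curve parametrization of lines tangent to $\mathcal{C}$, under which the billiard reflection on $\mathcal{E}$ acts as a translation by a fixed element; any integer multiple of this translation is again realized as a billiard reflection on an appropriate confocal conic, which is exactly what is needed. In the present setting I simply invoke the corresponding result from \cite{DragRadn2008}, which the authors explicitly cite as the motivation for this family of examples.
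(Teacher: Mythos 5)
Your proof is correct and follows essentially the same route as the paper's: invoke the Poncelet--Darboux fact that the $k$-step subsequence of a billiard trajectory is itself a billiard trajectory on a confocal conic (the paper cites \cite[Theorem 18]{DragRadn2011} for this), check the four reflection laws to get a virtual reflection configuration as in Definition~\ref{def:VRC}, and then apply Proposition~\ref{prop:drc-ellipsoids} to upgrade it to a double reflection configuration. The only substantive difference is that you spell out the pairwise reflection checks that the paper leaves implicit.
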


\begin{proof}
Since $\varphi_D(\mathbf{n}+i\mathbf{e}_j)=a_{n_1k_1+\dots+n_mk_{m}+ik_j}$,
($\mathbf{n}=(n_1,\dots,n_m)$), it follows by \cite[Theorem 18]{DragRadn2011} that sequence $(\varphi_D(\mathbf{n}+i\mathbf{e}_j))_{i\in\mathbf{Z}}$ represents a billiard trajectory within some ellipse $\mathcal{E}_j$, confocal with $\mathcal{E}$ and $\mathcal{C}$.

Immediately, by Definition \ref{def:VRC}, lines $\varphi(\mathbf{n}_0)$, $\varphi(\mathbf{n}_0+\mathbf{e}_i)$, $\varphi(\mathbf{n}_0+\mathbf{e}_j)$, $\varphi(\mathbf{n}_0+\mathbf{e}_i+\mathbf{e}_j)$ form a virtual reflection configuration for each $\mathbf{n}_0\in\mathbf{Z}^m$, $i,j\in\{1,\dots,m\}$.

Moreover, by Propositon \ref{prop:drc-ellipsoids}, they also form a double reflection configuration.
\end{proof}

\begin{remark}
It interesting to consider only nets where $m$ ellipses $\mathcal{E}_j$ appearing in the proof of Proposition \ref{prop:phiD} are distinct.
If some of them coincide, then we may consider a corresponding subnet.

Suppose that $(a_i)$ is a non-periodic trajectory.
Then, choosing any $m$, and any set of distinct positive numbers $k_1$, \dots, $k_m$, we get substantially different double reflection nets.

For $(a_i)$ being $n$-periodic, it is enough to consider the case $k_i=i$, $i\in\{1,\dots,[n/2]\}$, $(m=[n/2])$.
\end{remark}

\begin{example}
Suppose $(a_i)$ is a $5$-perodic billiard trajectory within $\mathcal{E}$, see Figure \ref{fig:petougao}.
\begin{figure}[h]
\centering
\psset{unit=0.5}
\begin{pspicture*}(-6.3,-6)(6.3,6)
\psset{linewidth=0.8pt}

\psellipse(0,0)(5.96, 5.71)
\psellipse(0,0)(2.58, 1.91)
\psellipse(0,0)(2.24, 1.41)

\psset{linewidth=1pt}
\psline(5.96,0)(-5.17, 2.84)(2.2, -5.31)(2.29, 5.27)(-5.19, -2.81)(5.96,0)

\rput(5,4){$\mathcal{E}'$}
\rput(1.15,2.15){$\mathcal{E}$}
\rput(-0.9,-0.9){$\mathcal{C}$}

\rput(4,1){$a_1$}
\rput(0,3.6){$a_2$}
\rput(-4,0.7){$a_3$}
\rput(-2.3,-2.7){$a_4$}
\rput(2.8,-3.2){$a_5$}

\end{pspicture*}
\caption{A Poncelet pentagon.}\label{fig:petougao}
\end{figure}
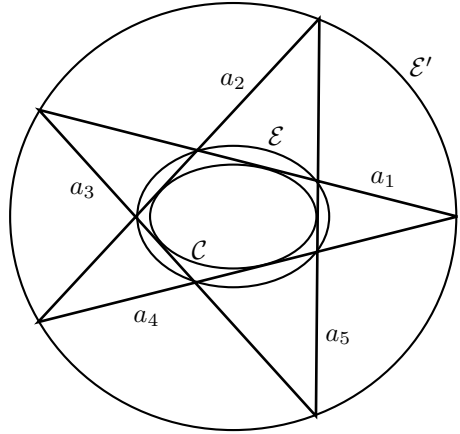

The corresponding double reflection net is:
$$
\varphi_D\ :\ \mathbf{Z}^{2}\to\mathcal{A},
\quad
\varphi_D(n_1,n_2)=a_{n_1+2n_2}.
$$
\end{example}

\subsection{$s$-skew lines and double reflection nets}
\label{sec:skew}
Now, let us consider a family of confocal quadrics in $\mathbf{E}^d$ ($d\ge3$) and fix its $d-1$ quadrics.
As usually, $\mathcal{A}$ is the set of all lines tangent ot the fixed quadrics.

It is shown in \cite{DragRadn2008} that, from a line in $\mathcal{A}$, we can obtain any other line from that set in at most $d-1$ reflections on quadrics from the confocal family.
We called lines $a$, $b$ from $\mathcal{A}$ \emph{$s$-skew} if $s$ is the smallest number such that they can be obtained $s+1$ such reflections.

Now, suppose lines $a$, $b$ are $s$-skew ($s\ge1$), and let
$\mathcal{Q}_1$, \dots, $\mathcal{Q}_{s+1}$
be the corresponding quadrics from the confocal family.

\begin{theorem}
There is a unique double reflection net
$$
\varphi_s\ :\ \mathbf{Z}^{s+1}\to\mathcal{A}
$$
which satisfies the following:
\begin{itemize}
\item
$\varphi_s(0,\dots,0)=a$

\item
$\varphi_s(1,\dots,1)=b$

\item
$\{\varphi(\mathbf{n}_0+i\mathbf{e}_j)\}_{i\in\mathbf{Z}}$ represents a billiard trajectory within $\mathcal{Q}_j$, for each $j\in\{1,\dots,s+1\}$ and $\mathbf{n}_0\in\mathbf{Z}^{s+1}$.
\end{itemize}
\end{theorem}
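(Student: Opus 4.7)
The plan is to construct $\varphi_s$ first on the unit hypercube $\{0,1\}^{s+1}$ using Proposition \ref{prop:drc.quad} and Theorem \ref{th:3d}, and then extend to all of $\mathbf{Z}^{s+1}$ by billiard propagation along each coordinate axis. First I would set $\varphi_s(\mathbf{0})=a$ and, for each $j\in\{1,\dots,s+1\}$, let $\varphi_s(\mathbf{e}_j)$ be the reflection of $a$ on $\mathcal{Q}_j$. Then, inductively on Hamming weight, for a vertex $\mathbf{v}$ containing $1$'s in at least two coordinates $i$ and $j$, I would define $\varphi_s(\mathbf{v})$ by applying Proposition \ref{prop:drc.quad} to the triple $\varphi_s(\mathbf{v}-\mathbf{e}_i-\mathbf{e}_j)$, $\varphi_s(\mathbf{v}-\mathbf{e}_i)$, $\varphi_s(\mathbf{v}-\mathbf{e}_j)$. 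Independence of the choice of the pair $(i,j)$ reduces to $3$D-consistency, which is exactly Theorem \ref{th:3d} applied to each embedded $3$-subcube.

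Next, to verify $\varphi_s(\mathbf{1})=b$, observe that inside any double reflection configuration $\{\ell,\ell_i,\ell_j,\ell_{ij}\}$ the vertex $\ell_{ij}$ coincides with the reflection of $\ell_i$ on $\mathcal{Q}_j$, by the very definition of virtual reflection configuration. Iterating along any monotone path from $\mathbf{0}$ to $\mathbf{1}$ inside the cube, $\varphi_s(\mathbf{1})$ therefore equals the composition of the reflections on $\mathcal{Q}_1,\dots,\mathcal{Q}_{s+1}$ applied to $a$ in the order dictated by the path; by $3$D-consistency the resulting line is independent of the path. Since $b$ is by hypothesis obtained from $a$ by exactly such a sequence of $s+1$ reflections, equality follows. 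To extend to $\mathbf{Z}^{s+1}$, I propagate in each direction $j$ by billiard reflection on $\mathcal{Q}_j$: from two consecutive values $\varphi_s(\mathbf{v})$ and $\varphi_s(\mathbf{v}+\mathbf{e}_j)$, the next value $\varphi_s(\mathbf{v}+2\mathbf{e}_j)$ is the continuation of the billiard trajectory. That every elementary quadrilateral created in this way remains a double reflection configuration is a consequence of the Double Reflection Theorem (Theorem \ref{th:DRT}) combined with $3$D-consistency.

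Uniqueness follows directly from the construction: the billiard-trajectory axiom forces each $\varphi_s(\mathbf{e}_j)$ to be the reflection of $a$ on $\mathcal{Q}_j$; Proposition \ref{prop:drc.quad} then pins down every remaining value on the cube; and the billiard propagation along each axis uniquely extends the map to $\mathbf{Z}^{s+1}$. I expect the main obstacle to be the propagation of consistency beyond the $3$-cube: for $s+1\ge 4$ one must argue that $3$D-consistency suffices to guarantee consistency on every higher-dimensional cube. This is the standard Adler--Bobenko--Suris multidimensional-consistency phenomenon, but one must verify explicitly that the quad-relation defined via double reflection configurations fits into that framework, which amounts to checking that every embedded $3$-subcube satisfies the hypotheses of Theorem \ref{th:3d} --- a point where the symmetric roles of $\mathcal{Q}_1,\dots,\mathcal{Q}_{s+1}$ and Corollary \ref{cor:refl} (preservation of caustics under reflection) are essential.
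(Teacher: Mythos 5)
There is a genuine gap at the very first step of your construction, and it propagates through both the existence and the uniqueness arguments. You set $\varphi_s(\mathbf{e}_j)$ to be ``the reflection of $a$ on $\mathcal{Q}_j$'', but this object is not well defined: a generic line meets $\mathcal{Q}_j$ in two points, and the reflection law at each of them produces a different line, so $a$ has \emph{two} billiard successors on $\mathcal{Q}_j$ (in a trajectory $\{\varphi_s(i\mathbf{e}_j)\}_{i\in\mathbf{Z}}$ these are precisely $\varphi_s(\mathbf{e}_j)$ and $\varphi_s(-\mathbf{e}_j)$, and nothing in the billiard-trajectory axiom distinguishes them). Consequently your construction involves $2^{s+1}$ branch choices, your closing claim that ``the billiard-trajectory axiom forces each $\varphi_s(\mathbf{e}_j)$'' is false, and the verification that $\varphi_s(\mathbf{1})=b$ does not go through: with an arbitrary choice of branches the monotone-path composition of reflections lands on $b$ only if each branch happens to agree with the one occurring in the chain joining $a$ to $b$. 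The condition $\varphi_s(1,\dots,1)=b$ is not a consequence of the rest of the data --- it is exactly the constraint that selects the correct branch in each direction, and a proof must use it as input to the construction.

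The paper's proof is organized around precisely this point. It invokes the billiard-algebra result of \cite{DragRadn2008} that for $s$-skew lines $a$, $b$ and each ordering $\mathbf{p}=(p_1,\dots,p_{s+1})$ of the quadrics there is a \emph{unique} chain of lines $\ell_0^{\mathbf p}=a,\ \ell_1^{\mathbf p},\ \dots,\ \ell_{s+1}^{\mathbf p}=b$ with consecutive members reflecting on $\mathcal{Q}_{p_i}$; it then defines $\varphi_s$ on the vertex $\chi_{\{i_1,\dots,i_k\}}$ of the unit cube as $\ell_k^{\mathbf p}$ for any permutation beginning with $i_1,\dots,i_k$, checks that this depends only on the initial segment, and only afterwards extends to $\mathbf{Z}^{s+1}$ by billiard propagation, with Theorem \ref{th:3d} guaranteeing consistency. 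In other words, both endpoints $a$ and $b$ anchor the construction from the outset, and the two-point existence-and-uniqueness of reflection chains is the essential ingredient your argument is missing. Your remaining steps (extension by Proposition \ref{prop:drc.quad}, reduction of higher-dimensional consistency to Theorem \ref{th:3d}, propagation along axes) are in the right spirit and match the paper's, but they cannot be salvaged without replacing the one-ended initialization by the two-ended chain construction.
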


\begin{proof}
First, using construction of billiard algebra from \cite{DragRadn2008}, we are going to define mapping $\varphi_s$ on $\{0,1\}^{s+1}$.

For a permutation $\mathbf{p}=(p_1,\dots,p_{s+1})$ of the set $\{1,\dots,s+1\}$, we take sequence of lines $(\ell_0^{\mathbf{p}},\dots,\ell_{s+1}^{\mathbf{p}})$ such that $\ell_0^{\mathbf{p}}=a$, $\ell_s^{\mathbf{p}}=b$, and $\ell_{i-1}^{\mathbf{p}}$, $\ell_{i}^{\mathbf{p}}$ satisfy the reflection law on $\mathcal{Q}_{i}$ for each $i\in\{1,\dots,s+1\}$.
Such a sequence exists and it is unique.
Moreover, if $k\in\{1,\dots,s+1\}$ is given, and permutations $\mathbf{p}$, $\mathbf{p}'$ coincide in first $k$ coordinates, then $\ell_{i}^{\mathbf{p}}=\ell_{i}^{\mathbf{p}'}$ for $i\le k$.
Then take $\{i_1,\dots,i_k\}$ to be a subset of $\{1,\dots,s+1\}$, and $\mathbf{p}$ any permutation of the set $\{1,\dots,s+1\}$ with $p_1=i_1$, \dots, $p_k=i_k$.
We define:
$$
\varphi_s(\chi_{\{i_1,\dots,i_k\}})=\ell_{k}^{\mathbf{p}},
$$
where $\chi$ is the corresponding characteristic function on $\{0,1\}^{s+1}$.

Subsequently, we define $\varphi_s$ on the rest of $\mathbf{Z}^{s+1}$, so that 
$\{\varphi_s(\mathbf{n}_0+i\mathbf{e}_j)\}_{i\in\mathbf{Z}}$ 
will represent billiard trajectories within $\mathcal{Q}_j$.

This construction is correct and unique due to the Theorem \ref{th:3d}.
\end{proof}

\subsection{Construction of double reflection nets}
\label{sec:const}
Let $\mathcal{Q}_1$, \dots, $\mathcal{Q}_m$ be distinct quadrics belonging to a confocal family and $\ell$ a line in $\mathbf{P}^d$.
Let us choose lines $\ell_i$ satisfying with $\ell$ the reflection law on
$\mathcal{Q}_i$, $1\le i\le m$.

\begin{theorem}\label{th:const}
There is unique double reflection net $\varphi\ :\ \mathbf{Z}^m\to\mathcal{A}_{\ell}$, with the following properties:
\begin{itemize}
\item
$\varphi(0,\dots,0)=\ell$;

\item
$\varphi(\mathbf{e}_i)=\ell_i$, for each $i\in\{1,\dots,m\}$.
\end{itemize}
By $\mathcal{A}_{\ell}$, we denoted the set of all lines in $\mathbf{P}^d$ touching the same $d-1$ quadrics from the confocal family as $\ell$.
\end{theorem}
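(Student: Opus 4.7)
My plan is to build $\varphi$ in two stages: first define it on the unit cube $\{0,1\}^m\subset\mathbf{Z}^m$, then extend to all of $\mathbf{Z}^m$ by iterated billiard reflection along each coordinate axis. At every step, uniqueness is already forced: Proposition \ref{prop:drc.quad} determines the fourth line of a double reflection configuration once the other three are specified, while the reflection law together with Corollary \ref{cor:refl} uniquely continues a billiard trajectory and guarantees that every line produced stays in $\mathcal{A}_\ell$ (the caustics are preserved by every reflection in the construction).

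For the first stage, I would define $\varphi$ on $\{0,1\}^m$ by induction on the Hamming weight $k$ of the coordinate vector $v$. For $k\le 1$ the values are prescribed by hypothesis. For $k\ge 2$, choose any two indices $i\neq j$ with $v_i=v_j=1$ and let $\varphi(v)$ be the fourth line of the double reflection configuration built from $\varphi(v-\mathbf{e}_i-\mathbf{e}_j)$, $\varphi(v-\mathbf{e}_i)$, $\varphi(v-\mathbf{e}_j)$, as furnished by Proposition \ref{prop:drc.quad}. The only real content is independence of the pair $(i,j)$: for $k=3$ this is exactly Theorem \ref{th:3d}; for $k>3$ two choices of pair are connected by a sequence of elementary swaps touching only three coordinates at a time, and each swap is handled by a further invocation of Theorem \ref{th:3d} inside the corresponding elementary $3$-face.

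For the second stage, I extend $\varphi$ to $\mathbf{Z}^m$ by requiring, for each base point $\mathbf{n}_0$ and direction $\mathbf{e}_j$, that $(\varphi(\mathbf{n}_0+i\mathbf{e}_j))_{i\in\mathbf{Z}}$ be the billiard trajectory within $\mathcal{Q}_j$ determined by the two consecutive values already known inside the unit cube. This extension is forced and well defined direction-by-direction. It remains to check that the extended map still satisfies the double reflection configuration condition on every elementary quadrilateral outside the starting cube; but any such quadrilateral lies inside an elementary $3$-cube sharing a face with a quadrilateral where the property is already established, so the condition propagates by one more application of Theorem \ref{th:3d}.

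The main obstacle is the well-definedness checks at both stages: the inductive unambiguity of $\varphi$ on $\{0,1\}^m$ for $m>3$, and the propagation of the double reflection configuration property from the starting cube to all of $\mathbf{Z}^m$. Both are essentially combinatorial and, once organized via elementary $3$-cube moves, reduce in every case to Theorem \ref{th:3d}; I expect no further geometric input to be needed beyond Proposition \ref{prop:drc.quad}, Corollary \ref{cor:refl}, and the $3D$-consistency already in hand.
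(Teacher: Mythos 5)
Your proposal is correct and follows essentially the same two-stage route as the paper's own proof: define $\varphi$ on the unit cube $\{0,1\}^m$ using Proposition \ref{prop:drc.quad} for uniqueness and Theorem \ref{th:3d} for consistency, then extend along coordinate directions by billiard trajectories, with consistency again from Theorem \ref{th:3d}. The paper states these steps more tersely; your induction on Hamming weight and the reduction of higher-dimensional consistency to elementary $3$-cube moves simply make explicit the well-definedness checks the paper leaves implicit.
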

\begin{proof}
First, we define $\varphi$ on $\{0,1\}^m$, from the condition that lines corresponding to each $2$-face of the unit cube need to form a double reflection configuration.
This construction is unique because of Proposition \ref{prop:drc.quad} and correct, due to the $3D$-consistency property proved in Theorem \ref{th:3d}.

At all other points of $\mathbf{Z}^m$, $\varphi$ is uniquely defined from the request that $\{\varphi(\mathbf{n}_0+i\mathbf{e}_j)\}_{i\in\mathbf{Z}}$ will be billiard trajectories within $\mathcal{Q}_j$.

Consistency of the construction follows again from Theorem \ref{th:3d}.
\end{proof}

\subsection{Focal nets and F-transformations of double reflection nets}
\label{sec:F}
Let
$\varphi\ :\ \mathbf{Z}^m \to \mathcal{A}$
be a double reflection net.

For given $\mathbf{n}_0\in\mathbf{Z}^m$ and distinct indices $i,j,k\in\{1,\dots m\}$, consider the following points of its $i$-th focal net:
\begin{gather*}
F_{i}=F^{(i)}(\mathbf{n}_0)=\varphi(\mathbf{n}_0)\cap\varphi(\mathbf{n}_0+\mathbf{e}_i),
   \\
F_{ij}=F^{(i)}(\mathbf{n}_0+\mathbf{e}_j)=
\varphi(\mathbf{n}_0+\mathbf{e}_j)\cap\varphi(\mathbf{n}_0+\mathbf{e}_j+\mathbf{e}_i),
   \\
F_{ik}=F^{(i)}(\mathbf{n}_0+\mathbf{e}_k)=
\varphi(\mathbf{n}_0+\mathbf{e}_k)\cap\varphi(\mathbf{n}_0+\mathbf{e}_k+\mathbf{e}_i),
	\\
F_{ijk}=F^{(i)}(\mathbf{n}_0+\mathbf{e}_j+\mathbf{e}_k)=
\varphi(\mathbf{n}_0+\mathbf{e}_j+\mathbf{e}_k)\cap\varphi(\mathbf{n}_0+\mathbf{e}_j+\mathbf{e}_k+\mathbf{e}_i).
\end{gather*}

\begin{proposition}\label{prop:coplanar}
Points 
$F_i$, $F_{ij}$, $F_{ik}$, $F_{ijk}$ are coplanar.
\end{proposition}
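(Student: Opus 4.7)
The strategy is to use pole--polar duality with respect to $\mathcal{Q}_i$ to recast the coplanarity of four points as the concurrency of four tangent hyperplanes, and then to extract this concurrency from the pencil structure carried by the four faces of the elementary $(i,j,k)$-cube at $\mathbf{n}_0$ which contain an $i$-edge.

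First, observe that by condition~(1) of Definition~\ref{def:dr-net} each sequence $\{\varphi(\mathbf{n}+t\mathbf{e}_i)\}_{t\in\mathbf{Z}}$ is a billiard trajectory within $\mathcal{Q}_i$, so every reflection vertex $F^{(i)}(\mathbf{n})=\varphi(\mathbf{n})\cap\varphi(\mathbf{n}+\mathbf{e}_i)$ lies on $\mathcal{Q}_i$; in particular $F_i,F_{ij},F_{ik},F_{ijk}\in\mathcal{Q}_i$. Let $\tau_i,\tau_{ij},\tau_{ik},\tau_{ijk}$ denote the corresponding tangent hyperplanes. Since $\mathcal{Q}_i$ is non-degenerate, coplanarity of the four tangent points is equivalent, via polarity in $\mathcal{Q}_i$, to the four tangent hyperplanes sharing a common point.

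Next, apply Theorem~\ref{th:virt.refl} to each of the four faces of the $(i,j,k)$-cube at $\mathbf{n}_0$ incident to the $i$-direction: faces $(i,j)$ at $\mathbf{n}_0$ and at $\mathbf{n}_0+\mathbf{e}_k$, and $(i,k)$ at $\mathbf{n}_0$ and at $\mathbf{n}_0+\mathbf{e}_j$. By condition~(2) of Definition~\ref{def:dr-net} each is a double reflection configuration, so the four tangent hyperplanes at its reflection points lie in a pencil; the two hyperplanes tangent to $\mathcal{Q}_i$ in this pencil are precisely the pair of $\tau_\ast$'s attached to the face's two $i$-edges. Thus the four pencils link $\tau_i,\tau_{ij},\tau_{ik},\tau_{ijk}$ in a four-cycle $\tau_i\leftrightarrow\tau_{ij}\leftrightarrow\tau_{ijk}\leftrightarrow\tau_{ik}\leftrightarrow\tau_i$, each pair sharing a pencil axis, and (dualizing in $\mathcal{Q}_i$) the corresponding chords $F_iF_{ij}$, $F_{ij}F_{ijk}$, $F_{ijk}F_{ik}$, $F_{ik}F_i$ form the sides of a spatial quadrilateral of focal points on $\mathcal{Q}_i$.

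Finally, invoke the Six-pointed star theorem (Theorem~\ref{th:zvezda}) together with Theorem~\ref{th:3d}: the twelve tangent hyperplanes at the twelve reflection points of the cube realize the star configuration, whose combined data---six pencils from the faces of the cube and eight triplets of collinear touching points from its vertices---forces the four tangent hyperplanes to any single quadric of $\mathcal{F}$ to be concurrent. Dualizing this common point of concurrence inside $\mathcal{Q}_i$ produces the 2-plane through $F_i,F_{ij},F_{ik},F_{ijk}$.

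The main obstacle is this last step. The four pairwise pencil relations produced in step two link the four $\tau$-hyperplanes only in pairs, and do not by themselves force fourfold concurrency; one must supplement them with the remaining combinatorial data of the star---namely the two pencils on the $(j,k)$-faces and the eight collinearity triplets on the lines $\varphi(\mathbf{n})$ at the vertices of the cube. Concretely, I would pick two pencil axes lying in a common tangent hyperplane (say the axes of face $(i,j)$ at $\mathbf{n}_0$ and face $(i,k)$ at $\mathbf{n}_0+\mathbf{e}_j$, both contained in $\tau_{ij}$), intersect them to obtain a candidate point, and then use the remaining pencils and the triplets at the four relevant vertices of the cube to verify that this point also lies in $\tau_{ik}$, which completes the required concurrency.
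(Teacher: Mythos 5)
Your setup coincides with the paper's: the four points lie on $\mathcal{Q}_i$, and via the polarity of $\mathcal{Q}_i$ their coplanarity is equivalent to an incidence statement about the four tangent hyperplanes $\tau_i,\tau_{ij},\tau_{ik},\tau_{ijk}$, which are tied together by the pencils attached to the faces of the elementary cube. The gap is exactly where you locate it yourself: the step that actually produces the concurrency is never carried out. As you correctly observe, the four pencils coming from the four $i$-faces only chain the hyperplanes into a closed four-cycle, i.e.\ a possibly skew quadrilateral of pole-lines, and this does not force coplanarity; your remedy is stated only in the conditional (``I would pick\dots and then use\dots to verify''), and the specific incidence from the star configuration that would make your candidate point $a_1\cap a_2$ land in $\tau_{ik}$ is neither identified nor proved. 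Since that is the entire content of the proposition, the argument is incomplete as written.

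The paper closes this more economically. It splits the four tangent hyperplanes into the two pairs attached to the two \emph{opposite} $i$-faces (e.g.\ $\{\tau_i,\tau_{ij}\}$ from the $(i,j)$-face at $\mathbf{n}_0$ and $\{\tau_{ik},\tau_{ijk}\}$ from the $(i,j)$-face at $\mathbf{n}_0+\mathbf{e}_k$), obtaining two pencils, and reads off from Theorem \ref{th:zvezda} that these two pencils are coplanar: in the dual picture of Figure \ref{fig:zvezda} the whole twelve-plane configuration is planar, since the six pencils either meet at planes of the configuration or, for the three ``opposite'' pairs, are forced into the common plane by the pencils that do. Hence the two pencils intersect; applying the polarity of $\mathcal{Q}_i$, the two corresponding lines of poles intersect as well, and each of them carries two of the four points $F_i,F_{ij},F_{ik},F_{ijk}$. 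Four points distributed on two intersecting lines are coplanar. If you want to complete your version, the fact you must extract from the star is precisely this one: the two pencils belonging to opposite $i$-faces share a common hyperplane (equivalently, their axes meet), which is exactly what places $a_1\cap a_2$ in $\tau_{ik}$.
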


\begin{proof}
This is a consequence of the theorem of focal nets from \cite{BS2008book}.
However, we will show the direct proof, from configurations considered in
Section \ref{sec:billiard}.

The four points belong to quadric $\mathcal{Q}_{i}$.
Tangent planes to $\mathcal{Q}_{i}$ at these points, divided into two pairs, determine two pencils of planes.
According to Theorem \ref{th:zvezda}, the two pencils are coplanar; thus they intersect.
As a consequence, the lines of poles with respect to the quadric
$\mathcal{Q}_{i}$, which correspond to these two pencils of planes, also intersect.
Follows that the four points are coplanar.
\end{proof}

We are going to define an $F$-transformation of the double reflection net.

First, we select a quadric $\mathcal{Q}_{\delta}$ from the confocal family and introduce line $\ell'$ which satisfies with $\varphi(\mathbf{n}_0)$ the reflection law on
$\mathcal{Q}_{\delta}$.

By Theorem \ref{th:const}, it is possible to construct double reflection net
$\bar{\varphi}\ :\ \mathbf{Z}^{m+1}\to\mathcal{A}$,
such that:
\begin{itemize}
\item
$\bar{\varphi}(\mathbf{n},0)=\varphi(\mathbf{n})$;

\item
$\bar{\varphi}(\mathbf{n}_0,1)=\ell'$.
\end{itemize}

Now, we define:
$$
\varphi^{+}\ :\  \mathbf{Z}^m\to\mathcal{A}_{\ell},
\quad
\varphi^{+}(\mathbf{n})=\bar{\varphi}(\mathbf{n},1).
$$

\begin{proposition}
Map $\varphi^{+}$ is an $F$-transformation of $\varphi$. 
\end{proposition}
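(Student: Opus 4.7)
The plan is to verify the two requirements in the definition of an $F$-transformation: namely, that (i) $\varphi^{+}$ is itself a line congruence (i.e.\ $\varphi^{+}(\mathbf{n})$ and $\varphi^{+}(\mathbf{n}+\mathbf{e}_i)$ intersect for each $\mathbf{n}$ and each $i\in\{1,\dots,m\}$), and (ii) for each $\mathbf{n}\in\mathbf{Z}^m$ the lines $\varphi(\mathbf{n})$ and $\varphi^{+}(\mathbf{n})$ intersect. Both facts will be read off from the fact that $\bar{\varphi}:\mathbf{Z}^{m+1}\to\mathcal{A}_{\ell}$ is a double reflection net.

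First I would identify the quadric governing the $(m+1)$-st coordinate direction of $\bar{\varphi}$. By construction, $\bar{\varphi}(\mathbf{n}_0,0)=\varphi(\mathbf{n}_0)$ and $\bar{\varphi}(\mathbf{n}_0,1)=\ell'$ satisfy the reflection law on $\mathcal{Q}_{\delta}$; since $\bar{\varphi}$ is a double reflection net, condition (1) of Definition \ref{def:dr-net} forces the full sequence $\{\bar{\varphi}(\mathbf{n},j)\}_{j\in\mathbf{Z}}$ to be a billiard trajectory within $\mathcal{Q}_{\delta}$ for every $\mathbf{n}\in\mathbf{Z}^m$. Consecutive lines in such a trajectory meet at the reflection point on $\mathcal{Q}_{\delta}$, so $\varphi(\mathbf{n})=\bar{\varphi}(\mathbf{n},0)$ and $\varphi^{+}(\mathbf{n})=\bar{\varphi}(\mathbf{n},1)$ intersect on $\mathcal{Q}_{\delta}$, settling (ii).

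For (i), I would use the same mechanism in the other directions: for each $i\in\{1,\dots,m\}$, the sequence $\{\bar{\varphi}(\mathbf{n}+k\mathbf{e}_i,1)\}_{k\in\mathbf{Z}}$ is a billiard trajectory within the quadric $\mathcal{Q}_i$ associated to the $i$-th direction of $\bar{\varphi}$ (which, by condition (1) of Definition \ref{def:dr-net} applied to $\bar{\varphi}$, coincides with the quadric governing the corresponding direction of $\varphi$). Hence $\varphi^{+}(\mathbf{n})$ and $\varphi^{+}(\mathbf{n}+\mathbf{e}_i)$ meet at a reflection point on $\mathcal{Q}_i$, which proves that $\varphi^{+}$ is a line congruence.

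Since everything is unwinding of definitions, I do not anticipate a genuine obstacle; the only point worth flagging is that $\varphi^{+}$ actually takes values in $\mathcal{A}_{\ell}$, but this is immediate from Corollary \ref{cor:refl}, which guarantees that reflection on a confocal quadric preserves the family of $d-1$ caustics, so iterating the reflections that build $\bar{\varphi}$ keeps all of its values inside $\mathcal{A}_{\ell}$.
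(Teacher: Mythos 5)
Your argument is correct and matches the paper's: the essential point in both is that $\varphi(\mathbf{n})=\bar{\varphi}(\mathbf{n},0)$ and $\varphi^{+}(\mathbf{n})=\bar{\varphi}(\mathbf{n},1)$ satisfy the reflection law on $\mathcal{Q}_{\delta}$ and hence intersect, which is exactly the defining property of an $F$-transformation. The paper leaves the remaining checks (that $\varphi^{+}$ is a line congruence with values in $\mathcal{A}_{\ell}$) implicit, since $\varphi^{+}$ is a slice of the double reflection net $\bar{\varphi}$; your explicit verification of these points is a harmless elaboration of the same route.
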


\begin{proof}
Lines $\varphi^{+}(\mathbf{n})$ and $\varphi(\mathbf{n})$ intersect, since they satisfy the reflection law on $\mathcal{Q}_{\delta}$.
\end{proof}

\subsection{Double reflection nets and Grassmannian Darboux nets}
\label{sec:grass}
Let us recall the definition of a Grassmannian Darboux net from \cite{ABS2009}:
a map from the edges of a regular square lattice $\mathbf{Z}^m$ to the Grassmannian
$\mathbf{G}^d_r$ of $r$-dimensional projective subspaces of $d$-dimensional projective space is \emph{a Grassmanian Darboux net} if the four $r$-spaces of an elementary quadrilateral belong to a $(2r+1)$-space.
For $r=0$, the ordinary Darboux nets from \cite{Schief2003} are obtained, where the four points of intersection associated to a quadrilateral, belong to a line.

Using the isomorphism between the Grassmannian $\mathbf{G}^d_r$  and
Grassmannian $\Gr(r+1, d+1)$ of $(r+1)$-dimensional vector subspaces
of the $(d+1)$-dimensional vector space and following the analytical
description of the Darboux nets from \cite{ABS2009}, we come to the
equations:
\begin{equation}
x_j^i=\rho^{ij}x^i+(I-\rho^{ij})x^j,
\end{equation}
where $x$ are $(d+1,r+1)$-matrices with appropriate normalization, and $\rho$ are invertible $(r+1,r+1)$-matrices with $I$ as the identity.
Developing further, we come to
$$
x^i_{jk}=\rho^{ij}_k(\rho^{ik}x^i+(I-\rho^{ik})x^k)+
(I-\rho^{ij}_k)(\rho^{jk}x^j+(I-\rho^{jk})x^k),
$$
and finally to a condition for a matrix-valued one form to be closed
$$
\rho^{ij}_k\rho^{ik}=\rho^{ik}_j\rho^{ij},
$$
see \cite{ABS2009}. Moreover, the closed one-form $\rho$ can be
represented as an exact one: $\rho^{ij}=s^i_j(s^i)^{-1}$ and the
rotation coefficients are $b^{ij}=((s^i)^{-1}-(s^i_j)^{-1})s^j$.

Now, consider a general double reflection net (\ref{eq:drnet}).

To each edge $(\mathbf{n_0},\mathbf{n_0}+\mathbf{e}_i)$ of $\mathbf{Z}^m$, we can associate plane which is tangent to $\mathcal{Q}_i$ at point 
$\varphi(\mathbf{n_0})\cap\varphi(\mathbf{n_0}+\mathbf{e}_i)$.
Since the lines corresponding to the vertices of a face form a double reflection configuration, the four planes associated to the edges belong to a pencil.

In this way, we see that a double reflection net induces map:
$$
E(\mathbf{Z}^m)\rightarrow \mathbf{G}^d_{d-1},
$$
where $E(\mathbf{Z}^m)$ is the set of all edges of the integer lattice $\mathbf{Z}^m$.

In this way, double reflection nets induce a subclass of dual Darboux nets.

It was shown in \cite{Schief2003} how to associate discrete integrable hierarchies to the Darboux nets.

\section{Yang-Baxter map}
\label{sec:yb}
\emph{Yang-Baxter map} is a map
$R:\mathcal{X}\times\mathcal{X}\to\mathcal{X}\times\mathcal{X}$,
satisfying the Yang-Baxter equation:
$$
R_{23}\circ R_{13}\circ R_{12}=R_{12}\circ R_{13}\circ R_{23},
$$
where $R_{ij}:\mathcal{X}\times\mathcal{X}\times\mathcal{X}\to\mathcal{X}\times\mathcal{X}\times\mathcal{X}$
acts as $R$ on the $i$-th and $j$-th factor in the product, and as the identity on the rest one, see \cite{ABS2004} and references therein.

Here, we are going to construct an example of Yang-Baxter map associated to confocal families of quadrics.
To begin, we fix a family of confocal quadrics in $\mathbf{CP}^{n}$:
\begin{equation}\label{eq:confocal.family}
\mathcal{Q}_{\lambda}\ :\ 
\frac{z_1^2}{a_1-\lambda}+\dots+\frac{z_d^2}{a_d-\lambda}
=
z_{n+1}^2,
\end{equation}
where $a_1$, \dots, $a_d$ are constants in $\mathbf{C}$, and $[z_1:z_2:\dots:z_{n+1}]$ are homogeneous coordinates in $\mathbf{CP}^{n}$.

Take $\mathcal{X}$ to be the space $\mathbf{CP}^{n*}$ dual to the $n$-dimensional projective space, i.e.\ the variety of all hyper-planes in $\mathbf{CP}^{n}$.
Note that a general hyper-plane in the space is tangent to exactly one quadric from family (\ref{eq:confocal.family}).
Besides, in a general pencil of hyper-planes, there are exactly two of them tangent to a fixed general quadric.

Now, consider a pair $x$, $y$ of hyper-planes.
They are touching respectively unique quadrics $\mathcal{Q}_{\alpha}$, $\mathcal{Q}_{\beta}$ from (\ref{eq:confocal.family}).
Besides, these two hyper-planes determine a pencil of hyper-planes.
This pencil contains unique hyper-planes $x'$, $y'$, other than $x$, $y$, that are tangent to $\mathcal{Q}_{\alpha}$, $\mathcal{Q}_{\beta}$ respectively.

We define
$R : \mathbf{CP}^{n*}\times\mathbf{CP}^{n*} \to \mathbf{CP}^{n*}\times\mathbf{CP}^{n*}$, in such a way that $R(x,y)=(x',y')$ if $(x',y')$ are obtained from $(x,y)$ in the just described way.

Maps
$$
R_{12},\ R_{13},\ R_{23}\ :\
\mathbf{CP}^{n*}\times\mathbf{CP}^{n*}\times\mathbf{CP}^{n*}
\to
\mathbf{CP}^{n*}\times\mathbf{CP}^{n*}\times\mathbf{CP}^{n*}
$$
are then defined as follows:
\begin{align*}
&R_{12}(x,y,z)=(x',y',z)\quad\text{for}\quad (x',y')=R(x,y);\\
&R_{13}(x,y,z)=(x',y,z')\quad\text{for}\quad (x',z')=R(x,z);\\
&R_{23}(x,y,z)=(x,y',z')\quad\text{for}\quad (y',z')=R(y,z).
\end{align*}

To prove the Yang-Baxter equation for map $R$, we will need the following

\begin{lemma}\label{lemma:YBE}
Let $\mathcal{Q}_{\alpha}$, $\mathcal{Q}_{\beta}$, $\mathcal{Q}_{\gamma}$
be three non-degenerate quadrics from family (\ref{eq:confocal.family}) and
$x$, $y$, $z$ respectively their tangent hyper-planes.
Take:
$$
(x_2,y_1)=R(x,y),\quad
(x_3,z_1)=R(x,z),\quad
(y_3,z_2)=R(y,z).
$$
Let $x_{23}$, $y_{13}$, $z_{12}$ be the joint hyper-planes of pencils determined by pairs $(x_3,y_3)$ and $(x_2,z_2)$, $(x_3,y_3)$ and $(y_1,z_1)$, $(y_1,z_1)$ and $(x_2,z_2)$ respectively.

Then $x_{23}$, $y_{13}$, $z_{12}$ are touching quadrics $\mathcal{Q}_{\alpha}$, $\mathcal{Q}_{\beta}$, $\mathcal{Q}_{\gamma}$ respectively.
\end{lemma}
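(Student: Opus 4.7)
The plan is to pass to the dual projective space $\mathbf{CP}^{n*}$, where the pencil of confocal quadrics becomes an ordinary pencil of quadrics $\{\mathcal{Q}_\lambda^*\}$, hyperplanes correspond to points, tangency becomes incidence, and pencils of hyperplanes correspond to projective lines. In this language, the map $R$ admits a simple secant-line description: for $x\in\mathcal{Q}_\alpha^*$ and $y\in\mathcal{Q}_\beta^*$, the line $\overline{xy}$ meets $\mathcal{Q}_\alpha^*$ at a second point $x_2$ and meets $\mathcal{Q}_\beta^*$ at a second point $y_1$, and $R(x,y)=(x_2,y_1)$.

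First I would observe that all twelve points of the configuration --- the three initial points $x,y,z$, the six secondary points $x_2,x_3,y_1,y_3,z_1,z_2$, and the three candidates $x_{23},y_{13},z_{12}$ --- lie in the 2-plane $\Pi:=\mathrm{span}\{x,y,z\}$ of the dual space. Indeed, $x_2,y_1\in\overline{xy}\subset\Pi$ and analogously for the pairs on the other two sides of the triangle $xyz$, while the three candidates are defined as intersections of lines that already lie in $\Pi$. Intersecting the pencil of dual quadrics with $\Pi$ yields a pencil of three conics $C_\alpha,C_\beta,C_\gamma$ in $\Pi$, and the lemma reduces to the following planar statement: the three secondary lines $\overline{x_3y_3}$, $\overline{x_2z_2}$, $\overline{y_1z_1}$ meet pairwise at points that lie on $C_\alpha,C_\beta,C_\gamma$ respectively.

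Next I would recognise this planar statement as the dual of the Six-pointed star theorem (Theorem \ref{th:zvezda}) restricted to the plane $\Pi$, and prove it using the $3D$-consistency of the double reflection configurations (Theorem \ref{th:3d}). In the star configuration, the six quadruples of four tangent planes sharing a pencil correspond, under duality and restriction to $\Pi$, to six quadruples of four collinear points each of which lies on two of the three conics. These quadruples are exactly the six lines $\overline{xy},\overline{xz},\overline{yz}$ (the three \emph{input} lines whose four points are known) together with $\overline{x_3y_3},\overline{x_2z_2},\overline{y_1z_1}$ (the three \emph{output} lines whose missing fourth points are to be identified with $x_{23},y_{13},z_{12}$). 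Combinatorially, these six quadruples label the six faces of a cube; $3D$-consistency forces that once the double reflection configurations on three adjacent faces are closed, so are those on the three opposite faces, producing the tangency of $x_{23}$ to $\mathcal{Q}_\alpha$, of $y_{13}$ to $\mathcal{Q}_\beta$, and of $z_{12}$ to $\mathcal{Q}_\gamma$.

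The most delicate step will be the combinatorial bookkeeping needed to identify the Yang--Baxter configuration produced by $x,y,z$ with a cube of double reflection configurations --- in particular, verifying that the three adjacent faces used as \emph{input} for Theorem \ref{th:3d} correspond precisely to the three pencils encoding $R(x,y)$, $R(x,z)$, $R(y,z)$, and that the planar reduction to $\Pi$ bypasses the collinear-touching-points hypothesis stated in Theorem \ref{th:zvezda}. Once this identification is established, the tangency claims for $x_{23},y_{13},z_{12}$ are automatic from the closure of the opposite faces, and translate back via duality into the primal statements that $x_{23}$, $y_{13}$, $z_{12}$ touch $\mathcal{Q}_\alpha$, $\mathcal{Q}_\beta$, $\mathcal{Q}_\gamma$ respectively, completing the proof.
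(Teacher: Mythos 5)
Your dualization and planar reduction coincide exactly with the paper's argument: pass to $\mathbf{CP}^{n*}$, note that the whole twelve-point configuration lies in the $2$-plane $\Pi$ spanned by $x^*,y^*,z^*$, and cut the dual pencil of quadrics by $\Pi$ to obtain a pencil of conics, so that the lemma becomes a closure statement for the hexagram generated by three points on three conics of a planar pencil. At this point the paper simply invokes \cite[Theorem~5]{ABS2004}, which \emph{is} that planar statement (proved there independently for pencils of conics); you instead propose to derive it from the Six-pointed star theorem and the $3D$-consistency of double reflection configurations, and this is where there is a genuine gap.

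The obstruction is the one you flag as ``the most delicate step'' but do not resolve. Theorem~\ref{th:zvezda} and Theorem~\ref{th:3d} take as initial data a single line together with its three billiard reflections --- equivalently, three tangent planes whose touching points are \emph{collinear}. This hypothesis is forced by the cube combinatorics you invoke: your three input pencils $\overline{xy}$, $\overline{xz}$, $\overline{yz}$ are pairwise adjacent faces, any three pairwise adjacent faces of a cube share a vertex, and that common vertex is a line passing through the touching points of $x$, $y$ and $z$, so those three points must be collinear for the configuration to be a star configuration at all. In Lemma~\ref{lemma:YBE} the hyperplanes $x,y,z$ are arbitrary tangent hyperplanes, and collinearity of their points of tangency is a nontrivial algebraic condition that fails generically; even after fixing $\Pi$, the triples to which Theorem~\ref{th:zvezda} applies form a proper subvariety of the three-parameter family of admissible triples $(x,y,z)$ (one-dimensional already for $n=3$), so the identity cannot be propagated to the general case by a Zariski-density argument. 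Hence your reduction proves the lemma only on that subvariety; to close the argument one needs the pencil-of-conics closure for \emph{arbitrary} initial points, which is precisely what the citation of \cite[Theorem~5]{ABS2004} supplies and what your proposal leaves unproved.
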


\begin{proof}
This statement, formulated for the dual space in dimension $n=2$ is proved as \cite[Theorem 5]{ABS2004}.

Consider the dual situation in arbitrary dimension $n$.
The dual quadrics $\mathcal{Q}_{\alpha}^*$, $\mathcal{Q}_{\beta}^*$, $\mathcal{Q}_{\gamma}^*$ belong to a linear pencil and points $x^*$, $y^*$, $z^*$, dual to hyper-planes $x$, $y$, $z$, are respectively placed on these quadrics.
Take the two-dimensional plane contained these three points.
The intersection of the pencil of quadrics with this, and any other plane as well, represents a pencil of conics.
Thus, Theorem 5 from \cite{ABS2004} will remain true in any dimension.

This lemma is dual to this statement, thus the proof is complete. 
\end{proof}

\begin{theorem}\label{th:YBE}
Map $R$ satisfies the Yang-Baxter equation.
\end{theorem}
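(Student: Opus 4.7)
The plan is to reduce the Yang--Baxter equation to a two-dimensional projective problem and then exploit Lemma \ref{lemma:YBE}.

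First, I would pass to the dual picture in $\mathbf{CP}^{n*}$: hyperplanes become points, the confocal family \eqref{eq:confocal.family} dualizes to a linear pencil of quadrics $\{\mathcal{Q}_\lambda^*\}$, and the map $R$ takes the concrete form: given $x^*\in\mathcal{Q}_\alpha^*$ and $y^*\in\mathcal{Q}_\beta^*$, the line $x^*y^*$ meets $\mathcal{Q}_\alpha^*$ and $\mathcal{Q}_\beta^*$ in one additional point each, and $(x_2^*,y_1^*)$ are those second intersections. In particular $R$ is an involution.

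The crucial observation is that for any initial triple $(x,y,z)$, every hyperplane appearing in $R_{12}\circ R_{13}\circ R_{23}(x,y,z)$ or in $R_{23}\circ R_{13}\circ R_{12}(x,y,z)$ has its dual point in the two-plane $\Pi=\langle x^*,y^*,z^*\rangle\subset\mathbf{CP}^{n*}$. Indeed, each application of $R$ only produces the second-intersection points of a line already lying in $\Pi$ with quadrics of the pencil, and these lie in $\Pi$. The restriction of $\{\mathcal{Q}_\lambda^*\}$ to $\Pi\cong\mathbf{CP}^2$ is a pencil of conics, and $R$ restricts to the corresponding two-dimensional second-intersection map. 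Hence it suffices to establish the Yang--Baxter equation for the two-dimensional map on a pencil of conics, which is exactly the setting of Theorem 5 of \cite{ABS2004} and of Lemma \ref{lemma:YBE} after projection into $\Pi$.

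In the two-dimensional case I would trace both three-fold compositions through the nine-point configuration built in Lemma \ref{lemma:YBE}: the triangle $x^*y^*z^*$ together with its six second intersections $x_2^*,y_1^*,x_3^*,z_1^*,y_3^*,z_2^*$. Using the involution $R\circ R=\mathrm{id}$ on each side of the triangle and the incidence statement provided by the lemma --- that the pairwise intersections of the three diagonal lines $x_3^*y_3^*$, $x_2^*z_2^*$, $y_1^*z_1^*$ lie on $\mathcal{Q}_\alpha^*,\mathcal{Q}_\beta^*,\mathcal{Q}_\gamma^*$ respectively --- one verifies that the final triple of hyperplanes produced by $R_{23}\circ R_{13}\circ R_{12}$ coincides with the final triple produced by $R_{12}\circ R_{13}\circ R_{23}$.

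I expect the main obstacle to be precisely this last combinatorial step: matching, at each stage of each composition, the line in $\Pi$ on which the next second-intersection is taken with one of the diagonal lines of the lemma, so that the lemma's incidence conclusion forces the two outcomes to agree. Once this bookkeeping is done, the Yang--Baxter equation $R_{23}\circ R_{13}\circ R_{12}=R_{12}\circ R_{13}\circ R_{23}$ follows directly, proving Theorem \ref{th:YBE}.
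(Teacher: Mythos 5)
Your overall strategy --- dualize, restrict to the plane $\Pi=\langle x^*,y^*,z^*\rangle$ so that the confocal family becomes a pencil of conics, and invoke Lemma \ref{lemma:YBE} --- is the right one; note, though, that the reduction to dimension two is already the content of the paper's proof of Lemma \ref{lemma:YBE} itself, so redoing it in the proof of the theorem adds nothing new. The genuine gap is in the step you defer as ``bookkeeping'': that step is precisely the content of the theorem, and the configuration in which you propose to carry it out is not the one that actually arises. If you apply the lemma to the triple $(x,y,z)$, its three diagonal lines are $x_3^*y_3^*$, $x_2^*z_2^*$ and $y_1^*z_1^*$, where $x_3^*,z_1^*$ are the second intersections of the side $x^*z^*$. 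But the side $x^*z^*$ never occurs in either composition: $R_{23}\circ R_{13}\circ R_{12}$ takes its middle second-intersection along the line $x_2^*z^*$, and $R_{12}\circ R_{13}\circ R_{23}$ along the line $x^*z_2^*$, and neither of these generically coincides with the diagonal $x_2^*z_2^*$ of your nine-point figure. So the incidences furnished by that application of the lemma do not attach to the lines on which the two compositions actually operate, and the identification of the two final triples does not follow.

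The paper closes exactly this gap by applying Lemma \ref{lemma:YBE} not to $(x,y,z)$ but to the shifted triple $(x,y,z_2)$, where $(y_3,z_2)=R(y,z)$. For that triple one has $R(x,y)=(x_2,y_1)$, $R(x,z_2)=(x_3,z_{12}')$ and $R(y,z_2)=(y_3,z)$, so the lemma's three diagonal lines become $x_3^*y_3^*$, $x_2^*z^*$ and the line joining $y_1^*$ with $(z_{12}')^*$ --- which are precisely the lines along which the last two steps of each composition are performed. Reading off the lemma's conclusion at the three pairwise intersection points then gives $x_{23}=x_{23}'$, $z_{12}=z_{12}'$ and $y_{13}=y_{13}'$ in turn. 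Your proposal is missing this choice of triple; without it (or an equivalent re-indexing) the combinatorial step you yourself identify as the main obstacle cannot be completed.
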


\begin{proof}
Let $x$, $y$, $z$ be hyper-planes in $\mathbf{CP}^{n}$.
We want to prove that
$$
R_{23}\circ R_{13}\circ R_{12}(x,y,z)=R_{12}\circ R_{13}\circ R_{23}(x,y,z).
$$

Denote by $\mathcal{Q}_{\alpha}$, $\mathcal{Q}_{\beta}$, $\mathcal{Q}_{\gamma}$ the quadrics from (\ref{eq:confocal.family}) touching $x$, $y$, $z$ respectively.

Let:
\begin{align*}
&(x,y,z)
 \xrightarrow{R_{12}}
(x_2,y_1,z)
 \xrightarrow{R_{13}}
(x_{23},y_1,z_1)
 \xrightarrow{R_{23}}
(x_{23},y_{13},z_{12}),
 \\
&(x,y,z)
 \xrightarrow{R_{23}}
(x,y_3,z_2)
 \xrightarrow{R_{13}}
(x_3,y_3,z_{12}')
 \xrightarrow{R_{12}}
(x_{23}',y_{13}',z_{12}').
\end{align*}
Now, apply Lemma \ref{lemma:YBE} to hyper-planes $x$, $y$, $z_2$.
Since:
$$
(x_2,y_1)=R(x,y),\quad
(x_3,z_{12}')=R(x,z_2),\quad
(y_3,z)=R(y,z_2),
$$
we have that joint hyper-plane of pencils $(x_3,y_3)$ and $(x_2,z)$ is touching $\mathcal{Q}_{\alpha}$ -- therefore, this plane must coincide with $x_{23}$ and $x_{23}'$, i.e.~$x_{23}=x_{23}'$.
Also, joint hyper-plane of pencils $(y_1,z_{12}')$ and $(x_2,z)$ is touching $\mathcal{Q}_{\gamma}$ -- therefore, this is $z_1$ and $z_{12}=z_{12}'$.
Finally, joint hyper-plane of pencils $(x_3,y_3)$ and $(y_1,z_{12}')$ is tangent to $\mathcal{Q}_{\beta}$ -- it follows this is $y_{13}=y_{13}'$, which completes the proof.
\end{proof}

\begin{remark}
Instead of defining $R$ to act on the whole space $\mathbf{CP}^{n*}\times\mathbf{CP}^{n*}$, we can restrict it on the product of two non-degenerate quadrics from (\ref{eq:confocal.family}), namely:
$$
R(\alpha,\beta)\ :\ \mathcal{Q}_{\alpha}^*\times\mathcal{Q}_{\beta}^*\to\mathcal{Q}_{\alpha}^*\times\mathcal{Q}_{\beta}^*,
$$
where pair $(x,y)$ of tangent hyper-planes is mapped into pair $(x_1,y_1)$ in such a way that $x$, $y$, $x_1$, $y_1$ belong to the same pencil.

The corresponding Yang-Baxter equation is:
$$
R_{23}(\beta,\gamma)\circ R_{13}(\alpha,\gamma)\circ R_{12}(\alpha,\beta)=
R_{12}(\alpha,\beta)\circ R_{13}(\alpha,\gamma)\circ R_{23}(\alpha,\beta),
$$
where both sides of the equation represent maps from
$\mathcal{Q}_{\alpha}^*\times\mathcal{Q}_{\beta}^*\times\mathcal{Q}_{\gamma}^*$ to itself.
\end{remark}

In \cite{ABS2004}, for irreducible algebraic varieties
$\mathcal{X}_1$, $\mathcal{X}_2$
\emph{a quadrirational mapping}
$F\ :\ \mathcal{X}_1\times\mathcal{X}_2$
is defined.
For such a map $F$ and any fixed pair 
$(x,y)\in\mathcal{X}_1\times\mathcal{X}_2$,
except from some closed subvarieties of codimension at least $1$,
the graph
$\Gamma_F \subset
\mathcal{X}_1\times\mathcal{X}_2\times\mathcal{X}_1\times\mathcal{X}_2$
intersects each of the sets
$\{x\}\times\{y\}\times\mathcal{X}_1\times\mathcal{X}_2$,
$\mathcal{X}_1\times\mathcal{X}_2\times\{x\}\times\{y\}$,
$\mathcal{X}_1\times\{y\}\times\{x\}\times\mathcal{X}_2$,
$\{x\}\times\mathcal{X}_2\times\mathcal{X}_1\times\{y\}$
exactly at one point (see \cite[Definition 3]{ABS2004}).
In other words, $\Gamma_{F}$ is a graph of four rational maps:
$F$, $F^{-1}$, $\bar{F}$, $\bar{F}^{-1}$.

The following Proposition is a generalization of \cite[Proposition 4]{ABS2004}.

\begin{proposition}
Map
$
R(\alpha,\beta)\ :\ \mathcal{Q}_{\alpha}^*\times\mathcal{Q}_{\beta}^*\to\mathcal{Q}_{\alpha}^*\times\mathcal{Q}_{\beta}^*,
$
is quadrirational.
It is an involution and it concides with its companion $\bar{R}(\alpha,\beta)$.
\end{proposition}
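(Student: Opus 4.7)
The plan is to work with a single symmetric description of the graph $\Gamma_{R(\alpha,\beta)}$ and read all three assertions (involution, coincidence with the companion, quadrirationality) off that description.

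First I would record the following characterization: a 4-tuple $(x,y,x_1,y_1)$ lies in $\Gamma_{R(\alpha,\beta)}$ precisely when $x,y,x_1,y_1$ belong to a common pencil of hyper-planes in $\mathbf{CP}^n$, and within that pencil $\{x,x_1\}$ are the two tangents to $\mathcal{Q}_\alpha$ while $\{y,y_1\}$ are the two tangents to $\mathcal{Q}_\beta$. This is immediate from the defining prescription of $R(\alpha,\beta)$ together with the observation already used in the excerpt: a generic pencil of hyper-planes contains exactly two hyper-planes tangent to a fixed non-degenerate quadric (the roots of a quadratic equation in the pencil parameter). The graph is then visibly stable under the two involutions that swap coordinates $(1,3)$ and $(2,4)$.

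The involution property is then a one-line check: the pencil generated by $x_1,y_1$ is the same as the one generated by $x,y$, so within it the tangent to $\mathcal{Q}_\alpha$ distinct from $x_1$ is $x$, and the one to $\mathcal{Q}_\beta$ distinct from $y_1$ is $y$; hence $R(\alpha,\beta)^2=\mathrm{id}$. The identification $\bar R(\alpha,\beta)=R(\alpha,\beta)$ is essentially tautological from the symmetric characterization: the companion, by definition, extracts from the same quadruple the ``partner'' tangent pair starting from the mixed input $(x,y_1)$ (resp.\ $(x_1,y)$), and this is exactly the rule describing $R(\alpha,\beta)$ applied to that pair, since the pencil $\mathrm{span}(x,y_1)$ equals $\mathrm{span}(x,y)$.

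For quadrirationality I would check that fixing any two of the four coordinates generically determines the pencil, and that knowledge of the pencil together with one tangent to a given quadric determines the other tangent rationally; this last point is Vieta's formula for the tangency quadratic. Concretely: $(x,y)$ and $(x_1,y_1)$ each span the pencil directly; $(x,y_1)$ spans it because generically $x\ne y_1$ (they are tangent to different quadrics, hence not equal on a dense open subvariety), and symmetrically for $(x_1,y)$. In each of the four cases the two missing coordinates are then the unique ``other'' tangents in the pencil, rational in the input.

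The only real work will be the degeneracy bookkeeping required by Definition~3 of \cite{ABS2004}: one must identify the exceptional loci where the pencil is undefined, two input hyper-planes coincide, or the tangency quadratic acquires a double root, and verify that each is a proper closed subvariety of codimension at least $1$ in the corresponding source. This reduces to classical statements about dual quadrics in a linear pencil, so I expect no genuine obstruction, only a careful enumeration of cases.
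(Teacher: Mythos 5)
Your argument is correct, but there is nothing in the paper to compare it against: the authors state this proposition without proof, offering only the remark that it generalizes Proposition~4 of \cite{ABS2004}. What you have written is essentially the missing proof, and it is the natural one --- it is the projective dualization of the ABS2004 planar argument (points on two conics cut out by a line, with Vieta supplying rationality) to tangent hyper-planes of two quadrics cut out by a pencil. Your symmetric description of the graph $\Gamma_{R(\alpha,\beta)}$ as the locus of quadruples lying in a common pencil with $\{x,x_1\}\subset\mathcal{Q}_\alpha^*$ and $\{y,y_1\}\subset\mathcal{Q}_\beta^*$ is exactly what makes the involution and companion claims one-line checks, and it matches the mechanism the paper itself uses in Lemma~\ref{lemma:YBE} and Theorem~\ref{th:YBE} (pencils of hyper-planes and their two tangents to a fixed quadric). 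Two small points worth making explicit in a final write-up: the ``exactly two tangents in a general pencil'' claim follows because the duals $\mathcal{Q}_\lambda^*$ form a linear pencil of quadrics and a pencil of hyper-planes is a line in $\mathbf{CP}^{n*}$, so tangency to $\mathcal{Q}_\alpha$ restricts to a genuine quadratic on that line; and the locus $x=y_1$ you worry about lies in $\mathcal{Q}_\alpha^*\cap\mathcal{Q}_\beta^*$, which has codimension one in each factor, so it causes no harm for quadrirationality. With that, your degeneracy bookkeeping is indeed routine.
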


\subsection*{Acknowledgements}
The authors are grateful to Prof.~Yu.~Suris for inspirative discussions, to Dr.~Emanuel Huhnen-Venedey for helpful remarks, and to the referee whose comments led to significant improvement of the manuscript.
The research was partially supported by the Serbian Ministry of Science and Technology, Project 174020 \emph{Geometry and Topology of Manifolds, Classical Mechanics and Integrable Dynamical Systems} and by the Mathematical Physics Group of the University of Lisbon, Project \emph{Probabilistic approach to finite and infinite
dimensional dynamical systems, PTDC/MAT/104173/2008}.
M.~R.\ is grateful to the Abdus Salam ICTP from Trieste and its associateship programme for support.

\begin{bibdiv}
\begin{biblist}
\bibselect{reference}
\end{biblist}
\end{bibdiv}

\end{document}